\newtheorem{theorem}{Theorem}
\newtheorem{lemma}{Lemma}
\newtheorem{prop}{Proposition}
\def\bm#1{\mathbf{#1}}
\newcommand{\squeezeup}{\vspace{-3mm}}
\newcommand{\at}[2][]{#1|_{#2}}
\title{	\Large Energy-Efficient Offloading in Delay-Constrained
Massive MIMO Enabled Edge Network Using Data Partitioning\thanks{The work in this paper is supported in part by the National Science Foundation under ECCS Grant 1808912. A part of this work is published in IEEE Globecom 2019~\cite{Malik2019}.}\\ 
}
\author{\normalsize{Rafia Malik and Mai Vu}\\
\small Department of Electrical and Computer Engineering, Tufts University, MA, USA\\
Email: rafia.malik@tufts.edu, mai.vu@tufts.edu} 
\date{\normalsize{May 21, 2018}} 
\begin{document}
\maketitle 
\begin{abstract}
We study a wireless edge-computing system which allows multiple users to simultaneously offload computation-intensive tasks to multiple massive-MIMO access points, each with a collocated multi-access edge computing (MEC) server. Massive-MIMO enables simultaneous uplink transmissions from all users, significantly shortening the data offloading time compared to sequential protocols, and makes the three phases of data offloading, computing, and downloading have comparable durations. Based on this three-phase structure, we formulate a novel problem to minimize a weighted sum of the energy consumption at both the users and the MEC server under a round-trip latency constraint, using a combination of data partitioning, transmit power control and CPU frequency scaling at both the user and server ends. We design a novel nested primal-dual algorithm using two different methods to solve this problem efficiently. Optimized solutions show that for larger requests, more data is offloaded to the MECs to reduce local computation time in order to meet the latency constraint, despite higher energy cost of wireless transmissions. Massive-MIMO channel estimation errors under pilot contamination also causes more data to be offloaded to the MECs. Compared to binary offloading, partial offloading with data partitioning is superior and leads to significant reduction in the overall energy consumption.
\end{abstract}

\begin{IEEEkeywords}
Multi-Access Edge Computing, massive MIMO, computation offloading, energy efficiency 
\end{IEEEkeywords}

\section{Introduction}
Evolution of wireless communication networks towards denser deployments with large number of connected devices has led to an exponential growth in wireless traffic. Global mobile data traffic is expected to increase seven-fold from 2016 to 2021, of which, mobile video data by smartphones is the fastest-growing segment with a projected increase of 870\%~\cite{Cisco2017}. The trend towards \textit{smarter} smartphones has enabled new services such as Augmented Reality (AR), Virtual Reality (VR) and multi-user interaction, which further cause traffic surges particularly in localized live broadcast events such as a concert or a sports event. To cope with these demands, future generation networks including 5G and beyond are expected to handle multiple folds increase of data traffic at stringent latency requirements. One solution to this spike in latency-sensitive data demand is to bring data computation power closer to the devices at the multi-access edge computing (MEC) networks. MEC is a promising technology to provide cloud-computing capabilities within the Radio Access Network (RAN) in close proximity to mobile subscribers and eliminate the need to route traffic through the core network~\cite{MEC2014}. By moving the computing and storage features to the edge, MEC can offer a distributed and decentralized service environment characterized by proximity, low latency, and high rate access~\cite{Doppler2018}. 

Power hungry devices and computation intensive applications naturally lead to an escalated energy demand and therefore make energy efficiency a key parameter in the design of next generation networks. To this end, power management techniques in hardware are becoming popular. \textit{Dynamic Voltage and Frequency Scaling} (DVFS) is a common power saving technique which uses frequency scaling to reduce power consumption in a CMOS integrated circuit (e.g. the CPU~\cite{LeSueur2010}). A linear growth in the CPU frequency $f$ causes the dynamic power dissipation to increase cubically, leading to an energy consumption as $E_{dyn} \propto f^2$~\cite{Silva2018}. Therefore reducing the frequency leads to a dramatic reduction in energy consumption, which also holds true for modern processors with nanoscale features with non-negligible static power consumption~\cite{Silva2018}. DVFS has been traditionally used for personal computers and is now making its way to MEC servers and smart consumer devices including smartphones and tablets to conserve energy~\cite{Bezerra2013}. 

To handle the vast amount of services and computation requirements, MEC servers with high computation capacities also employ parallel computing via virtualization techniques to enable independent computation for each assigned user or task~\cite{Mao2017}. Network virtualization is a catalyst in supporting multi-tenancy and multiple services for edge computing architectures enabling efficient network operations and service provisioning. Virtualization technologies including network slicing, software defined networking (SDN), network function virtualization (NFV), virtual machines (VM), and containers are some of the key enablers of MEC networks~\cite{Taleb2017}. Using virtualization, the MEC server can optimally allocate processing frequency, or clock speed, per task or user such that each user can experience an independently orchestrated QoS, hence allowing the MEC to efficiently compute all users' tasks in parallel within the latency constraint.

To efficiently transfer data between user devices and MEC servers for computation, wireless base-stations/access-points (BS/AP - AP used in this paper interchangeably for both base station and access point) equipped with \textit{massive MIMO} technology can dramatically increase spectral efficiency by allowing the AP to simultaneously accommodate multiple co-channel users. The massive number of antennas at an AP can be used to create asymptotically orthogonal channels and deliver near interference-free signals for each user terminal~\cite{Wang2017}. For MEC architectures with co-located MEC server and AP~\cite{MEC2014}, the use of massive MIMO significantly reduces the wireless data transmission time especially in the uplink (data offloading) and hence has a drastic impact on the round-trip edge computation latency.

It is a realistic vision for future wireless networks to employ all aforementioned technologies: edge computing, massive MIMO transmission, network virtualization, and frequency scaling for power management. Such a network can be energy efficient in terms of both computation and communication while providing low-latency communication, and supporting highly-intensive computation tasks for their connected users. To achieve this vision, we will need to solve the intricate problem of optimal resource allocation, particularly in balancing local and MEC-offloaded computation, frequency allocation, energy consumption and time utilization.

\subsection*{Related Works}
Resource allocation in MEC networks has been an active area of research recently. Most existing works have considered energy minimization at only one side of the network, either the users~\cite{Scutari2015}\cite{Chae2016}\cite{You2017}, or the MEC-server when it is energy constrained, such as a UAV-MEC~\cite{Wang2018}\cite{Chu2018}. Common among prior works is the assumption of binary offloading, that is, each computation task is atomic and cannot be partitioned; hence these works examine a system-level problem with the perspective of choosing whether to offload a task to the MEC or to perform the computation locally. For example, several works minimize the system's computation overhead (energy and processing time)~\cite{Guo2018} and system-wide energy consumption~\cite{Leng2016}, while others consider a system utility function such as a weighted sum of the energy consumption and time delay in the entire system, considering users as mobile~\cite{Letaief2017} or generic connected devices~\cite{Sengul2017}. These works also assume that the user's clock frequency or computing capability is fixed, and therefore is not an optimizing variable.  Only recently,  partial offloading, where user tasks can partly be computed locally and partly offloaded to the MEC, has been considered for the problem of AP's energy minimization subject to users' latency requirement\cite{Wang2018}.

For multiuser MEC systems, the multiple access scheme affects edge computing latency significantly. Existing works typically employ Time Division Multiple Access (TDMA) for different users to sequentially offload information to the MEC in their designated time slots~\cite{Chae2016}\cite{You2017}\cite{Bi2018}\cite{Wang2018}. Under TDMA, the time spent for offloading computation tasks for all users in the uplink far exceeds the time for delivering results in the downlink, therefore, the latter is usually assumed to be negligible and is not factored into the round-trip latency. Such a latency constraint is important and has been considered in energy efficient computation for the users~\cite{Chae2016}\cite{Sengul2017}\cite{Bi2018} and for the MEC access points \cite{Wang2018}. Several works try to reduce the latency by assuming numerous channels available for offloading from users to the MECs, however, at the expense of consuming significantly more bandwidth~\cite{Fu2016}\cite{Liu2018}.

To solve for the different variations of resource allocation problems in MEC networks, algorithms with varying levels of complexity have been proposed. For example, centralized and distributed successive convex approximation (SCA) based algorithms are used in a static framework to reach local optimal solutions in a finite number of iterations~\cite{Scutari2015}.  A mixed integer non-linear problem is solved using bisection search and difference of convex optimization methods by decomposing the energy minimization problem into independent subproblems for individual users~\cite{Le2019}. A game-theoretic approach is used to find a near-optimal solution to the computational overhead (time and energy) minimization problem where convergence to the Nash equilibrium scales linearly with the number of computation tasks~\cite{Guo2018}. A distributed implementation of the offloading game achieves faster convergence compared to the centralized method at a small performance loss, with convergence speed scaling almost linearly with the number of users~\cite{Fu2016}.

\subsection*{Major Contributions}
In this work, we consider a multi-cell multi-user network scenario where access points equipped with massive MIMO antenna arrays and co-located MEC servers offer computation offloading. The novel feature of massive MIMO allows the users to offload their data to the MECs simultaneously, instead of using the sequential TDMA protocol, and hence significantly reduces the round-trip latency. We formulate a novel optimization problem to minimize the system's energy consumption, including both the users and the MEC, subject to a latency requirement. Our aim is to explore the benefit of computation offloading to meet a hard latency constraint while minimizing the energy consumption at both the user terminals and the MEC servers. The formulated problem befits edge network problems where computation offloading proves useful; for instance in AR/VR applications, a video surveillance system collecting data from multiple recording cameras, offloading data in real-time to the edge server for facial or object detection, or in real-time map rendering for autonomous vehicular applications, where computation offloading to the edge can be critical for real-time updates \cite{MEC2018}. The main contributions of this work can be summarized as follows.
\begin{enumerate}[leftmargin=*]
\item We show the immense benefits of massive MIMO in edge computing systems, which have not been explored earlier. Not only does the use of massive MIMO enable simultaneous (instead of sequential TDMA~\cite{Chae2016,You2017,Bi2018,Wang2018}) transmissions among multiple users, dramatically reducing offloading time and overall latency, it also reduces the transmit power at the AP for a given data rate and has a positive impact on the system energy consumption. Thus employing massive MIMO in an MEC system is beneficial for improving both latency and power consumption.

\item We propose a new formulation for MEC system-level energy minimization under massive MIMO employment. The formulation accounts for energy consumption at both the users and MEC ends, compared to current literature considering only one side \cite{Scutari2015,Chae2016,You2017,Wang2018,Chu2018}. Minimizing system level energy with delay and power constraints makes the problem not only richer but also more applicable in practice.

\item We design efficient, customized nested algorithms exploiting problem structure to solve for optimal resource allocation with potential for real-time implementation. The resource allocation is inclusive of data partitioning (partial offloading instead of binary offloading), time, power and computing frequency allocation, compared to the majority of current MEC literature which just optimizes for a part of these variables~\cite{Guo2018,Leng2016,Letaief2017,Sengul2017,Chae2016,You2017,Bi2018,Wang2018,Fu2016}. The algorithm with a nested structure, consisting of an outer latency-aware descent algorithm for data partitioning and an inner primal-dual algorithm for time, frequency and power allocation, is novel, efficient and guarantees solution optimality.

\end{enumerate}
\subsubsection*{Notation} $\boldsymbol{X}$ and $\boldsymbol{x}$ denote a matrix and vector respectively, $\nabla^2 f(x)$ denotes the Hessian matrix, and $\nabla^2 f(x)^{-1}$ denotes its inverse. For an arbitrary size matrix, $\boldsymbol{Y}$,  $\boldsymbol{Y}^\ast$ denotes the Hermitian transpose, and $\textbf{diag}(y_1,...,y_N)$ denotes an $N\times N$ diagonal matrix with diagonal elements $y_1,...,y_N$. $\boldsymbol{I}$ denotes an identity matrix, and $\boldsymbol{0, 1}$ denote an all zeros and all ones vector respectively. The standard circularly symmetric complex Gaussian distribution is denoted by $\mathcal{CN}(\boldsymbol{0}, \boldsymbol{I})$, with mean $\boldsymbol{0}$ and covariance matrix $\boldsymbol{I}$. $\mathbb{C}^{k \times l}$ and $\mathbb{R}^{k \times l}$ denote the space of $k \times l$ matrices with complex and real entries, respectively.

\section{System Model}
We consider a system with $L \geq 1$ Access Points (APs), each equipped with a massive-MIMO array of $N$ antennas deployed over a target area, for instance in a sports stadium, a town fair or a crowded exhibit or mall. We consider a deployment scenario where a Multi-access Edge Computing (MEC) server is collocated with each AP\cite{MEC2014}. Single-antenna users offload data in the uplink to the MEC-APs and receive computed results in the downlink. Each AP serves an area denoted as a cell, which contains $K$ users, making the system a multi-cell network.
\setlength{\belowcaptionskip}{-20pt}
\begin{figure}[t]
\centering
\includegraphics[scale = 0.57]{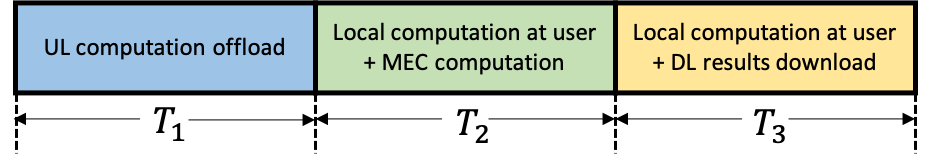}
\caption{Operation phases at each MEC server }
\label{fig:phases}
\end{figure}

Each MEC server schedules data offloading for the users that it serves. We consider the type of applications typically composed of multiple procedures, for example computation components in an AR application, and hence can support partial offloading or program partitioning~\cite{Mao2017}. For such applications allowing partial offloading, for the $i^{th}$ user, the $u_i$ computation bits can be partitioned into $q_i$ bits to be computed locally and $s_i$ bits to be offloaded to the MEC server. We consider the data-partition model where the computation task is bit-wise independent, and also assume that such partition does not incur additional computation bits, that is, $u_i = q_i + s_i$~\cite{Wang2018}. The data-partition task model is applicable for semantic image segmentation in map-rendering applications~\cite{Buslaev2018}, or in modern technologies employed in AR/VR applications, such as multi-user encoding~\cite{Hou2017}, multicasting and tiling~\cite{Dai2019}\cite{He2018}, among others, in which edge computing servers can transcode and stitch the data into a seamless real-time stream~\cite{Intel2017}. For a given number of computation requests, we examine the problem of resource allocation for completing the targeted tasks within the latency constraint in the most energy efficient manner.

Given a total latency constraint denoted as $T_d$, the time for data offloading, computation (at both users and MEC), and the delivery of computed results to the users should not exceed $T_d$. The system's operation can hence be divided into three phases as shown in Figure~\ref{fig:phases}; (i) computation offloading from users to the MEC in the uplink, (ii) computation at the MEC server and locally at the user, and (iii) transmission of computed results from the MEC to users in the downlink. Note that local computation at the user can span both phases two and three. Simultaneous transmission from multiple users in the uplink through the use of massive MIMO significantly shortens the offloading time (Phase I), making downloading time (Phase III) no longer negligible as was with TDMA offloading~\cite{Chae2016}\cite{You2017}\cite{Bi2018}\cite{Wang2018}. A subsequent benefit of this non-negligible downloading time is that users can now perform local computation through the time in both phases II and III.

Energy at both the MEC and the user terminal is consumed for two tasks; 1) for data computation which depends on the CPU frequency, and 2) for transmitting the data for computation offloading or delivering results over the uplink or downlink channel respectively. CPU frequency is an important parameter which affects both time and energy consumptions. While a higher CPU frequency implies lesser computation time, it also increases energy consumption~\cite{Chae2016}. Therefore optimizing CPU frequency using DVFS can achieve energy efficient computation~\cite{LeSueur2010}. For MEC servers with high computation capabilities, we assume virtualization in our system model such that the MEC can optimally allocate processing frequency per user. In this way, the MEC can efficiently compute all users' tasks in parallel within the constrained latency.

\setlength{\belowcaptionskip}{-20pt}
\begin{figure}[t]
        \begin{minipage}{0.55\textwidth}
                \centering
                \includegraphics[scale = 0.55]{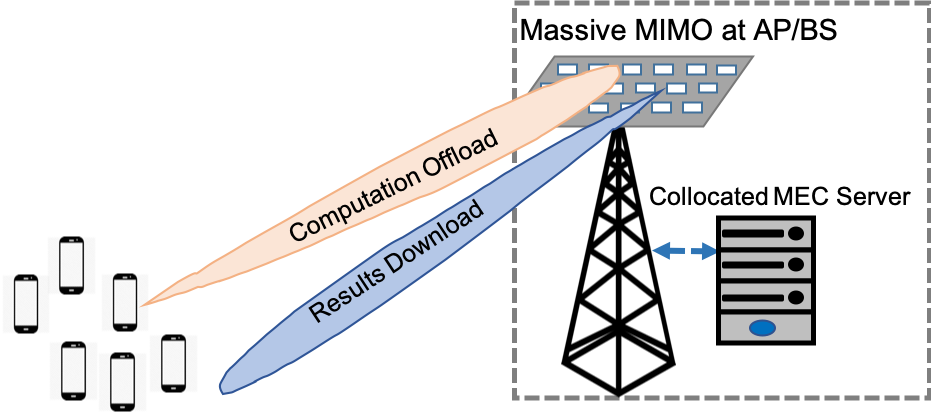}
                \caption{Beamforming using massive MIMO antenna array at AP/BS}
                \label{beamform}
        \end{minipage}
       \hfill
        \begin{minipage}{0.45\textwidth}
                \centering
                \includegraphics[scale=0.4]{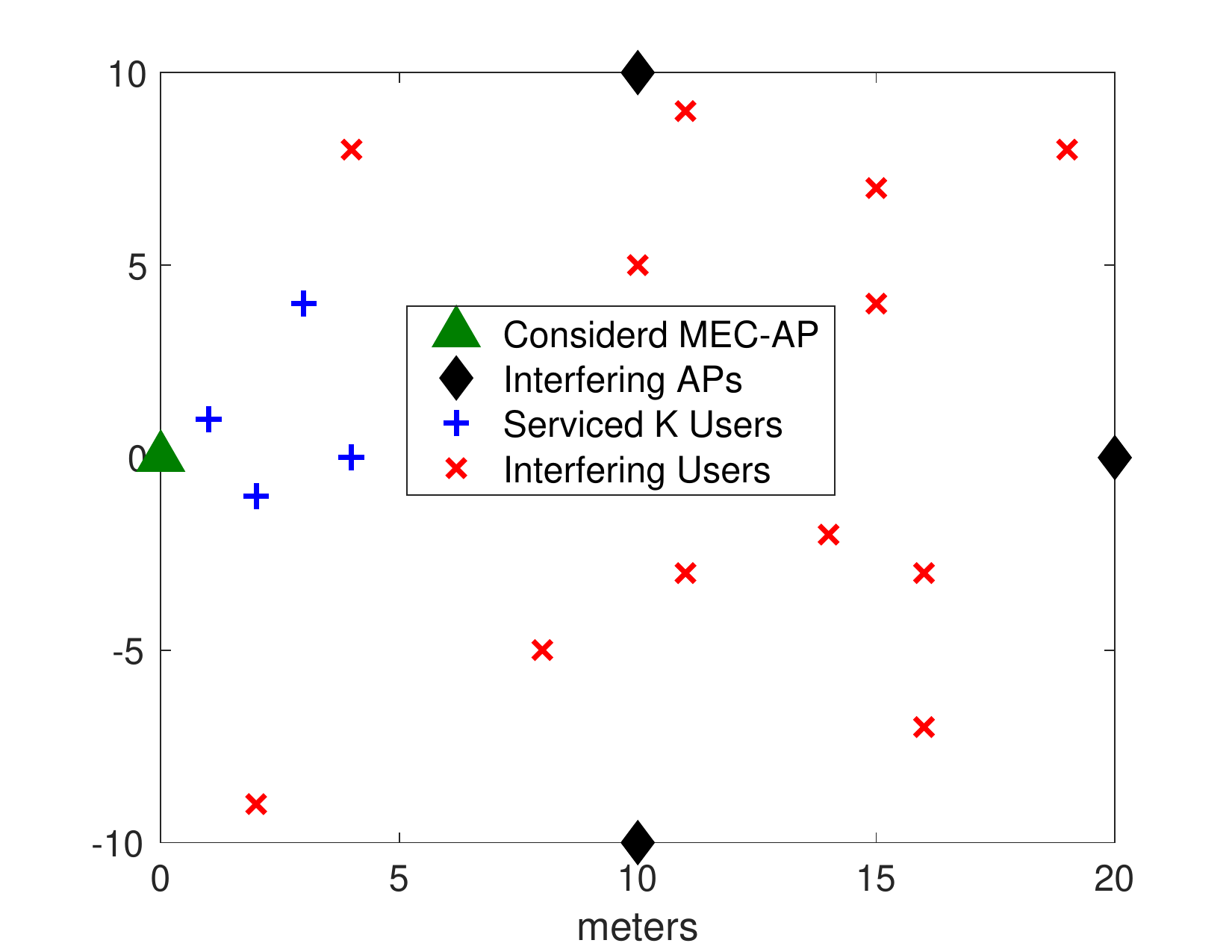}
                \caption{Network Layout}
                \label{system_model}
        \end{minipage}
\end{figure}
In the considered multi-cell environment, we discuss and set up the problem for a typical cell denoted as the \textit{home cell}. Assuming a frequency reuse factor of 1 (as typical in LTE networks~\cite{Ali2011}), other cells which use the same pilots as the home cell are called \textit{contaminating cells}. The effect of the multi-cell environment is taken into account implicitly via inter-cell interference and massive MIMO pilot contamination. Each user has a computing requirement for a certain amount of data, which can be divided into a part for local computation and a part for offloading to the MEC. The partitioned amount of data $\boldsymbol{s}$ to be offloaded to the MEC is a key design variable which spans all the phases of the system's operation. Next we discuss the time and energy consumption while highlighting the design variables for resource allocation specific to each phase of operation, in addition to $\boldsymbol{s}$ which is a design variable across all three phases.
\subsection{Phase I: Computation Offloading in Uplink}
\subsubsection{Data Transmission}
In a given time slot, $K$ user terminals concurrently offload data to the $N$-antenna AP over the uplink channel in the same time-frequency resource. We define $\beta_i \triangleq S_{\sigma}d_{i}^{-\gamma}$ as the large scale fading between the $i^{\text{th}}$ user and the AP, assuming it to be the same for all AP antennas (independent of $N$), where $S_{\sigma}$ denotes log-normal shadowing with standard deviation $\sigma \text{ dB}$, $d_{i}$ is the distance from the $i^{th}$ user to the AP, and $\gamma$ is the path loss exponent.

With a sufficient number of antennas, the channel hardens such that the effective channel gains become nearly deterministic~\cite{Larsson2017}. This channel hardening effect has been observed experimentally for a massive MIMO system built specifically for MEC application with 128 antennas~\cite{Sara2018}. We consider the operating regime with $N \gg K$ typical for a TDD massive MIMO system, in which the throughput becomes independent of the small-scale fading with channel hardening~\cite{Marzetta2016}. This throughput depends on the type of detector employed at the massive MIMO terminal. We consider maximum ratio combining, for which the uplink achievable transmission rate for the $i^{th}$ user in the $l^{th}$ cell, $r_{u,i}$, is given as~\cite{Marzetta2016}
\begin{equation}\label{rate_ul}
r_{u,i} = \nu \log_2 \left ( 1 + \frac{\text{SINR}_{li}^{ul}}{\Gamma_{1}} \right ), \ \text{SINR}_{li}^{ul} = \frac{N \gamma_{li}^l p_{li}}{\sigma_{1,li}^2}
\end{equation}
where $\Gamma_{1} \geq 1$ is a constant accounting for the capacity gap due to practical coding and modulation schemes, $p_{li}$ is the transmit power of the $i^{\text{th}}$ user in the $l^{\text{th}}$ cell. Here the constant $\nu = \frac{\tau_c - \tau_u}{\tau_c}$ accounts for the effective loss of samples due to the transmission of pilot symbols in each coherence interval for channel estimation at the AP, where $\tau_c$ is the length of the coherence interval and $\tau_u$ is the duration of pilot transmission. We follow standard practice of using the critical number of pilot symbols equaling the number of users: $\tau_u = K$~\cite{Ngo2014}. The term $\sigma_{1,li}^2$ is the interference and noise power including the effect of pilot contamination and intercell interference as 
\begin{equation}\label{INul}
\sigma_{1,li}^2 = \sigma_r^2 + \sum_{q \in \mathcal{P}_l} \sum_{i=1}^{K} \beta^l_{qi} p_{qi} + \sum_{q \notin \mathcal{P}_l} \sum_{i=1}^{K} \beta^l_{qi} p_{qi} + N \sum_{q \in \mathcal{P}_l \backslash l} \gamma_{qi}^l p_{qi}
\end{equation}
where $\sigma_r^2$ is the receiver noise variance, the second term represents interference from contaminating cells, the third term is inter-cell interference, and the last term is interference due to the mean-square channel estimates from contaminating cells excluding the home cell and is also called the coherent interference~\cite{Marzetta2016}.

\subsubsection{Energy and Time Consumption}
An offloading overhead is incurred for transmitting the offloaded bits over the uplink channel to the MEC server. The energy consumed for offloading the $i^{th}$ user's data is given by $E_{OFF,i} = p_{li} t_{u,i}$, where $p_{li}$ is the transmit power and $t_{u,i}$ is the transmission time for the $i^{th}$ user. Let $B$ denote the channel bandwidth, then $t_{u,i} = \frac{s_i}{B r_{u,i}}$. All users offload their computation bits simultaneously, and the total energy and time consumptions for Phase I can then be written as
\begin{equation}\label{E_ul}
E_{OFF} = \sum_{i=1}^{K} \frac{p_{li} s_i} {B r_{u,i}}, \ T_1 = \max_{i \in [1,K]} t_{u,i}.
\end{equation}
In this phase, the offloading time $\boldsymbol{t_u} = [t_{u,1}...t_{u,K}] \in \mathbb{R}^{K \times 1}$ is a design variable to be optimized which also implicitly affects the transmit power $\boldsymbol{p} = [p_{l1}...p_{lK}] \in \mathbb{R}^{K \times 1}$ as will be shown later.

\subsection{Phase II: Computation at MEC Server and User Terminals}
\subsubsection{Local Computation at User Terminal}
Using DVFS architecture, the energy consumption and the processing time for local computation at the $i^{th}$ user is given as~\cite{Mao2017} 

\begin{align}\label{t_Li}
&E_{LC} = \sum_{i=1}^{K} \kappa_i c_i (u_i - s_i) f_{u,i}^2, \ \ t_{L,i} = \frac{c_i (u_i - s_i)}{f_{u,i}}
\end{align}
where $\kappa_i$ is the effective switched capacitance, $f_{u,i}$ denotes the average CPU frequency, $c_i$ denotes the CPU cycle information, that is, the number of CPU cycles required for computing one input bit, and $q_i = u_i - s_i$ is the total number of bits required to be locally computed at $i^{th}$ user respectively. Frequency scaling can be performed per CPU cycle, however, this causes large optimization overhead. We therefore consider average CPU frequency optimization. The users' local computation time can also extend to Phase III while the MEC is sending computed results back to users. This fact is considered later in the problem formulation (constraint d).

\subsubsection{Computation at the MEC server}
Assuming that the MEC servers have high computation capacities and utilize parallel computing via virtualization for independent computation per user, the energy consumed for computing offloaded bits of all users is given as
\begin{equation}\label{E_OC}
E_{OC} = \sum_{i = 1}^{K} \kappa_m f_{mi}^2 d_m s_i
\end{equation}
where $s_i$ is the number of bits offloaded by the $i^{th}$ user to the MEC, $d_m$ is the number of CPU cycles required to compute one bit at the MEC, the CPU frequency $f_{mi}$ is the computational rate assigned to the $i^{th}$ user's task by the MEC, and $\kappa_m$ is a hardware dependent constant of the MEC server. The computation time for processing the offloaded bits of $K$ users via parallel processing is given as $T_2$ below where $t_{M,i}$ is the time for computing $i^{th}$ user's offloaded task
\begin{equation}\label{tMEC}
T_2 = \max\{t_{M,i}\}, \ t_{M,i} = \frac{d_m s_i}{f_{mi}} \ \forall i \in [1, K].
\end{equation}
In this phase, the allocated CPU frequencies at the users, $\boldsymbol{f_u} = [f_{u1}...f_{uK}] \in \mathbb{R}^{K \times 1}$, and at the MEC, $\boldsymbol{f_m} = [f_{m1}...f_{mK}] \in \mathbb{R}^{K \times 1}$, are design variables. Note that the time for local computation $\boldsymbol{t_L} = [t_{L1}...t_{LK}] \in \mathbb{R}^{K \times 1}$ and offloaded computation $\boldsymbol{t_M}= [t_{M1}...t_{MK}] \in \mathbb{R}^{K \times 1}$ are directly affected by $\boldsymbol{f}$ and $\boldsymbol{f_m}$ as given in (\ref{t_Li}) and (\ref{tMEC}).

\subsection{Phase III: Delivering Computed Results in Downlink}
For downlink transmission we consider Time Division Duplex (TDD) operation such that the channel estimates in the uplink can be used for the downlink via reciprocity. With a sufficient number of antennas at the AP, not only do the effects of small scale fading and frequency dependence disappear due to channel hardening, but also channel estimation at the terminals, and the associated transmission of downlink pilots becomes unnecessary~\cite{Marzetta2016}\cite{Larsson2017}.

For the $i^{th}$ user in the $l^{th}$ cell (home cell), the downlink transmission rate with maximum ratio linear precoding at the MEC-AP is given as~\cite{Marzetta2016}
\vspace{-2mm}
\begin{equation}\label{rate_dl}
r_{d,i} = \log_2 \left ( 1 + \frac{\text{SINR}_{li}^{dl}}{\Gamma_{2}} \right ), \ \text{SINR}_{li}^{dl} = \frac{N P \gamma_{li}^l \eta_{li}}{\sigma_{2,li}^2}
\end{equation}
where $\Gamma_{2} \geq 1$ is the capacity gap similar to (\ref{rate_ul}), interference and noise power term is
\begin{equation}\label{INdl}
\sigma_{2,li}^2 = \sigma_i^2 + P \sum_{q \in \mathcal{P}_l} \sum_{i=1}^{K} \beta^l_{qi} \eta_{qi} + P \sum_{q \notin \mathcal{P}_l} \sum_{i=1}^{K} \beta^l_{qi} \eta_{qi}  + N P \sum_{q \in \mathcal{P} \backslash l} \gamma_{qi}^q \eta_{qi}
\end{equation}
where $\sigma_i^2$ is the noise at the $i^{\text{th}}$ user terminal in the $l^{\text{th}}$ cell, $\{\eta_{li}\} \in [0,1]$ are the power coefficients satisfying $\sum_{i=1}^{K} \eta_{li} \leq 1$ for all $l$, and $P$ is the AP's average transmit power. Similar to the uplink transmission, the second term in (\ref{INdl}) is pilot contamination, the third term is inter-cell interference which manifests as uncorrelated noise in the home cell, and the last term is coherent interference resulting from mean-square channel estimation errors. Since there is no pilot transmission in this phase, the effective downlink transmission rate is equal to the data rate.

\subsubsection{Energy and Time Consumption}
The transmission time for delivering the $i^{th}$ user's computation results can be written in terms of the downlink rate in (\ref{rate_dl}) as $t_{d,i} = \frac{\tilde{s}_i}{B r_{d,i}}$. Here $\tilde{s}_i$ denotes the number of information bits generated after processing $s_i$ offloaded bits of the $i^{th}$ user, and is assumed to be proportional to $s_i$, that is $\tilde{s}_i = \mu s_i$.  The proportionality ratio $\mu$ between the offloaded data and the computed results adds an application-centric flexibility to our system model in terms of the data size in downlink. For instance, for applications such as face recognition in a scenario where data from multiple video recording cameras is offloaded to the edge server for analysis, the computed results would be smaller in size than the offloaded data, in which case $\mu < 1$ can be chosen \cite{Chen2015}. On the other hand, for video-rendering applications such as those delivering $360^\circ$ videos in mobile networks, the ratio between the Field Of View (FOV) and the source video can be such that in order to provide a 4K video at the user device, the source video must be delivered over the network at a 16K resolution, which leads to $\mu \gg 1$ \cite{Mangiante2017}. The AP simultaneously transmits computed results for all users, and the total energy and time overhead for Phase III are then given as
\begin{equation}\label{E_dl}
E_{DL} = \sum_{i=1}^{K} \frac{P \eta_{li} \mu s_i}{B r_{d,i}}, \ T_3 = \max_{i \in [1,K]} t_{d,i}.
\end{equation}
where the total energy consumption is the sum of energy for data transmissions to all users, and the time consumption is the maximum among all users because of simultaneous transmissions in the downlink. In this phase, the downloading time $\boldsymbol{t_d}= [t_{d,1}...t_{d,K}] \in \mathbb{R}^{K \times 1}$ is a design variable for optimal resource allocation and also implicitly affects the power allocation $\boldsymbol{\eta} = [\eta_{l1}...\eta_{lK}] \in \mathbb{R}^{K \times 1}$ in the downlink at the AP.

\section{Energy Optimization Formulation}
Considering a multi-cell multi-MEC network, we formulate a novel optimization problem to minimize the weighted energy at both the MEC and users in the home cell, taking into account the effect from other cells via intercell interference. We then analyze the problem formulation to prepare for algorithm design in the next section.
\subsection{Weighted Energy Minimization Problem Formulation}
We formulate an edge computing problem which explicitly accounts for physical layer parameters including available transmit powers from each user and the MEC, associated massive MIMO data rates with realistic pilot contamination and interference. The problem jointly optimizes for the amount of partial data offloaded from each user $\boldsymbol{s}$, the CPU frequency for local computation at each user $\boldsymbol{f_u}$, the CPU frequency at the MEC allocated to each user's data computation $\boldsymbol{f_m}$, time allocation for uplink and downlink transmission $\boldsymbol{t_u, t_d}$, and the time duration for each phase, $T_1, T_2 $ and $T_3$, within a total latency requirement.
 
The total energy consumption by all users, based on equations (\ref{t_Li}) and (\ref{E_ul}), can be written as
\begin{equation}\label{E_u}
E_u =  \sum_{i=1}^{K} \left[\frac{t_{u,i}(2^{\frac{s_i}{\nu t_{u,i}B}} - 1)\Gamma_{1}\sigma_{1,i}^2}{N \gamma_i} + \kappa_i c_i (u_i - s_i) f_{u,i}^2 \right]
\end{equation}
Similarly, the total energy consumption at the MEC server, based on equations (\ref{E_OC}) and (\ref{E_dl}), is
\begin{equation}\label{E_m}
E_m = \sum_{i=1}^{K} \left[\frac{t_{d,i}(2^{\frac{\mu s_i}{t_{d,i}B}} - 1)\Gamma_{2}\sigma_{2,i}^2}{N \gamma_{i}}  + \kappa_m d_m f_{mi}^2 s_i \right]
\end{equation}
In these expressions, using (\ref{rate_ul}) and (\ref{rate_dl}), and by definition of the uplink and downlink transmission rates as $r_{u,i} = \frac{s_i}{\nu t_{u,i} B}$ and $r_{d,i} = \frac{\mu s_i}{t_{d,i} B}$ respectively, we have implicitly replaced the power allocation variables for per-user uplink transmission power ($p_{li}$) and per-user downlink power ($\eta_{li}$) as functions of the time allocation and data partitioning as follows
\begin{align}\label{poweralloc}
p_{li} = \frac{(2^{\frac{s_i}{\nu t_{u,i} B}} - 1)\Gamma_{1}\sigma_{1,i}^2}{N \gamma_{i}}, \  \ \eta_{li} = \frac{(2^{ \frac{\mu s_i}{t_{d,i} B}} - 1)\Gamma_{2}\sigma_{2,i}^2}{P N \gamma_{i}}
\end{align}Based on these expressions, a weighted system energy minimization can be formulated as
\setcounter{equation}{13}
\begin{align*}\label{Pnew}
\bm{(P)} \ \ &\min_{\boldsymbol{t, f, s}} \ \ \ E_{\text{total}} = (1 - w) E_{u} + w E_{m}  \tag{\theequation}&\\
&\text{ s.t. }  \ \ \  \text{Eqs. } (\ref{E_u})-(\ref{E_m}) \tag{a-b}&\\
&\hphantom{\text{ s.t.}}  \ \ \  \sum_{j=1}^{3} \left ( T_j\right ) = T_d, \ \ \frac{c_i (u_i - s_i)}{f_{u,i}} + t_{u,i} - T_d \leq 0 \tag{c-d} , \forall i \in [1,K]&\\
&\hphantom{\text{ s.t.}}  \ \ \  t_{u,i} - T_1 \leq 0, \ \ \  \frac{d_m s_i}{f_{mi}} - T_2 \leq 0, \ \ \ t_{d,i} - T_3 \leq 0, \forall i \in [1,K] \tag{e-g}&\\
&\hphantom{\text{ s.t.}}  \ \ \  \sum_{i=1}^{K_u} f_{mi} - f_{m,\max} \leq 0 \tag{h}&
\end{align*}
Here $E_{\text{total}}$ is weighted sum of energy consumption at all users ($E_{u}$) and the MEC ($E_{m}$), with $1 - w$ and $w$ as the respective weights. The optimizing variables of this problems are $\boldsymbol{t} = [t_{u,1}...t_{u,K}, t_{d,1}...t_{d,K}, T_1, T_2, T_3]$, $\boldsymbol{f} = [f_{u1}... f_{uK}, f_{m1}...f_{mK}]$ and $\boldsymbol{s} = [s_1...s_K]$. Implicit constraints not mentioned are  $f_{i,\min} \leq f_{u,i} \leq f_{i,\max}$ and $f_{m,\min} \leq f_{mi} \leq f_{m,\max} \ \forall i \in [1, K]$. Given parameters of the problems are $T_d$ as the total latency constraint, $P$ as the AP's transmit power, $B$ as the channel bandwidth, $\Gamma_1$, $\Gamma_2$ as the uplink and downlink capacity gaps, $(\kappa_i, c_i)$ and $(\kappa_m, d_m)$ as the switched capacitance and CPU cycle information at the users and the MEC respectively.

Constraint (c) shows that both the time consumed for all three phases at the MEC, and the time consumed for offloading and local computation at each user should not exceed $T_d$. Constraints (e-g) show that the time consumed separately for offloading, computation of users' tasks at the MEC, and downloading time for each user's results must be less than the maximum allowable time, $\{ T_1, T_2, T_3\} $, for that phase as given in \{(\ref{E_ul}),(\ref{tMEC}), (\ref{E_dl})\} respectively. Constraint (h) denotes the maximum CPU frequency at the MEC, which implies that with virtualization, the sum of frequencies allocated to all users' tasks should not exceed the MEC processor's capability. 

\subsection{Problem Analysis and Decomposition}
Problem (P) is a complicated multi-variable non-linear optimization which is also non-convex. This is due to constraint (\ref{Pnew}b) in which the term $s_i f_{mi}^2$ is neither convex nor concave since its Hessian is indefinite with one positive and one negative eigenvalue, making this constraint and consequently problem (P) non-convex. Next, we present analysis results which can be used to decompose this problem into two simpler convex sub-problems.

\begin{lemma}\label{lemma1}
The objective function $f_0$ of the problem (P) is convex as a function of $s_i$. Furthermore, if the system parameters satisfy the following condition which signifies a typical network setting where wireless transmission energy is non-negligible compared to computation energy:
\setcounter{equation}{14}
\begin{align*}\label{gradf0}
&\frac{(1 - w) 2^{\frac{s_i}{\nu t_{u,i}B}} \ln2 \Gamma_1 \sigma_{1,i}^2}{\nu B N \gamma_i} + \frac{w \mu 2^{\frac{\mu s_i}{t_{d,i}B}} \ln2 \Gamma_2 \sigma_{2,i}^2}{B N \gamma_i} + w \kappa_m d_m f_{m,i}^2 - (1 - w) \kappa_i c_i f_{u,i}^2 \Bigg \vert_{s_i \to 0} \geq 0 \tag{\theequation}
\end{align*}
then the total energy in problem (P) is an increasing function of each $s_i$. If condition (\ref{gradf0}) does not hold, then there exists a unique value of $s_i$ that minimizes the objective function $f_0$ obtained by solving $\nabla f_0(s_i) = 0$.
\end{lemma}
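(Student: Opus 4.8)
The plan is to treat $f_0 = E_{\text{total}}$ as a single-variable function of $s_i$, holding all other optimization variables ($t_{u,i}, t_{d,i}, f_{u,i}, f_{mi}$, and the remaining $s_j$) fixed, and to establish all three claims purely from the signs of the first and second derivatives. First I would isolate the terms of $E_u$ and $E_m$ in (\ref{E_u})--(\ref{E_m}) that actually depend on $s_i$: from the uplink, the exponential $\frac{(1-w)t_{u,i}(2^{s_i/(\nu t_{u,i}B)}-1)\Gamma_1\sigma_{1,i}^2}{N\gamma_i}$ together with the affine term $-(1-w)\kappa_i c_i f_{u,i}^2 s_i$; and from the MEC side, the exponential $\frac{w t_{d,i}(2^{\mu s_i/(t_{d,i}B)}-1)\Gamma_2\sigma_{2,i}^2}{N\gamma_i}$ together with $w\kappa_m d_m f_{mi}^2 s_i$. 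Everything else is constant in $s_i$.

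Differentiating once yields exactly the left-hand side of condition (\ref{gradf0}), but evaluated at a general $s_i$ rather than in the limit $s_i\to 0$: the two exponential contributions are strictly increasing in $s_i$, while the two capacitance terms contribute the $s_i$-independent constant $w\kappa_m d_m f_{mi}^2 - (1-w)\kappa_i c_i f_{u,i}^2$. Differentiating a second time annihilates the affine parts and leaves only the two exponential pieces, each carrying a positive coefficient involving $(\ln 2)^2$, the positive noise powers $\sigma_{1,i}^2, \sigma_{2,i}^2$, and positive time and channel parameters. Since these are strictly positive under the standing physical assumptions and $w\in[0,1]$, we get $\nabla^2 f_0(s_i) > 0$, establishing strict convexity in $s_i$ and proving the first claim.

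For the remaining two claims I would exploit that, by convexity, $\nabla f_0(s_i)$ is itself strictly increasing in $s_i$, and that its only $s_i$-dependent terms are the two increasing exponentials; hence $\nabla f_0$ attains its infimum over $s_i\geq 0$ in the limit $s_i\to 0$. Condition (\ref{gradf0}) is precisely the statement $\nabla f_0(s_i)|_{s_i\to 0}\geq 0$, so when it holds we have $\nabla f_0(s_i)\geq \nabla f_0(0)\geq 0$ for all feasible $s_i$, i.e. $E_{\text{total}}$ is non-decreasing, which gives the second claim. When (\ref{gradf0}) fails we have $\nabla f_0(0)<0$; since $\nabla f_0$ is continuous, strictly increasing, and diverges to $+\infty$ as $s_i\to\infty$ (the exponentials dominate the bounded constant terms), the intermediate value theorem produces a unique root of $\nabla f_0(s_i)=0$, which by strict convexity is the unique global minimizer, giving the third claim.

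I do not expect a genuine obstacle here, as the argument is a routine convexity-plus-monotonicity analysis. The only points demanding care are bookkeeping ones: confirming that the derivative of $E_u+E_m$, restricted to its $s_i$-dependence, matches (\ref{gradf0}) term-by-term, and verifying that the lower endpoint of the feasible interval for $s_i$ is effectively $s_i = 0$ so that evaluating the derivative in the limit $s_i\to 0$ is the legitimate infimum test. It is also worth stating explicitly that all hardware and channel constants are positive so that no sign ambiguity arises in the second-derivative computation.
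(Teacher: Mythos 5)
Your proposal is correct and follows essentially the same route as the paper: compute $\nabla^2_{s_i} f_0$, observe it is a sum of two positive exponential terms (hence convexity), recognize condition (\ref{gradf0}) as the gradient evaluated at $s_i \to 0$, and conclude monotonicity of $f_0$ from monotonicity of its gradient. The only difference is that you spell out the uniqueness of the minimizer in the failing case via the intermediate value theorem and divergence of the gradient, a detail the paper compresses into ``the Lemma follows directly.''
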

\begin{proof}
Let $f_0(.)$ be the objective function in (\ref{Pnew}). The second-order derivative for the objective function with respect to $s_i$ is
\begin{align*} 
\nabla^2_{s_i} f_0(s_i) &= \frac{(1 - w) 2^{\frac{s_i}{\nu t_{u,i}B}} (\ln2)^2 \Gamma_1 \sigma_{1,i}^2}{\nu^2 B^2 N \gamma_i t_{u,i}} + \frac{w \mu^2 2^{\frac{\mu s_i}{t_{d,i}B}} (\ln2)^2 \Gamma_2 \sigma_{2,i}^2}{B^2 N \gamma_i t_{d,i}}
\end{align*}
which is positive for all considered ranges of problem parameters. Thus, $f_0$ is a convex function of $s_i$. The expression in Lemma~\ref{lemma1} is the gradient of $f_0(\cdot)$ with respect to $s_i$ evaluated at $s_i = 0$, Since the gradient expression is an increasing function of $s_i$, if the gradient at $s_i$ approaching $0$ is non-negative ($\nabla_{s_i} f_0 (s_i) \at {s_i \to 0} \geq 0$) then the gradient is non-negative for all $s_i \geq 0$ and the Lemma follows directly.
\end{proof}
\textbf{Discussion: }Since the objective function is convex in $s_i$, there exists an optimal point, $s_i^\star \ \forall i \in [1,K]$, which minimizes $E_{\text{total}}$. We write the gradient expression in (\ref{gradf0}) as
\begin{equation}\label{gradbreak}
\nabla_{s_i} f_0 (s_i) = \nabla_{s_i} E_{OFF,i} + \nabla_{s_i} E_{DL,i} + \nabla_{s_i} E_{OC,i} + \nabla_{s_i} E_{LC,i}
\end{equation}
and define two cases for finding the optimal $s_i^\star$ as follows.
\subsubsection{Case I: Condition in (\ref{gradf0}) holds}
Here the total energy consumption is an increasing function of $s_i$, thus the optimal  $s_i^\star$ is as small as possible subject to the constraints of the problem (P). The first two terms in (\ref{gradbreak}) denote the rate of change, with respect to $s_i$, of energy consumption in wireless transmission in uplink and downlink, respectively. The last two terms represent the rate of change, with respect to $s_i$, of energy consumption in computation at the MEC and locally at the user respectively. Note that $E_{OFF,i}, E_{DL,i}$ and $E_{OC,i}$ are all increasing functions of the offloaded bits $s_i$ and positive, while $E_{LC,i}$ is negative. A positive overall gradient therefore implies that $\nabla_{s_i} E_{OFF,i} + \nabla_{s_i} E_{DL,i} + \nabla_{s_i} E_{OC,i} > \nabla_{s_i} E_{LC,i}$, which typically holds true for practical scenarios of typical network settings, with multiple APs and users located in a reasonable size target area, due to the dominant energy consumptions for wireless transmissions and MEC computation over that of local computation.
\subsubsection{Case II: Condition in (\ref{gradf0}) does not hold}
This case implies there exists $s_i^\star > 0$ which minimizes the objective function for the weighted sum energy. This case only holds if $w \to 0$, such that the problem is reduced to that of user energy minimization, since if $w \neq 0$, the gradient would be positive even for negligible transmission loss, because $f_{m,i} > f_{u,i}$ making $\nabla E_{OC,i} > \nabla E_{LC,i}$. This scenario only arises in non-typical settings, for example a single AP serving a single user at the close distance of $3$m, the minimum required separation between a Femtocell-AP and a user terminal (UT)~\cite{Clerckx2013}. For this case, condition (\ref{gradf0}) is reversed under $w = 0$, as the two middle terms in (\ref{gradbreak}) vanish, and the negligible transmission loss makes $\nabla_{s_i} E_{LC,i} > \nabla_{s_i} E_{OFF,i}$. Thus to conserve the user's energy, the optimal solution here is to offload all its data to the MEC thanks to the proximity to the MEC-AP. For most networks, however, if the MEC energy is also taken into account $(w > 0)$ or at a larger UT-AP distance then it may never be energy-optimal to offload all data to the MEC.

For the rest of the paper, we assume a typical network setting where Lemma \ref{lemma1} always holds true. Since the system energy is then increasing with the amount of offloaded data, it is of interest for the system to keep the offloaded data to minimum, only offloading when local computation violates power or latency constraints. Note that for non-typical networks, the solution approach presented in the next section would still hold except that we need to slightly modify the outer algorithm in Section \ref{outer_opt} to take into account the solutions of $\nabla f_0(s_i) = 0$ while keeping the latency constraint in check. Because of space constraint, we will focus on the typical network case only.

If the amount of offloaded data is given, then all we need to do is solve problem (P) for the remaining variables. The following lemma provides a theoretical basis for doing that.
\begin{lemma}\label{lemma2}
For a given set of offloaded data $s_i$, the problem (P) is convex in the remaining optimizing variables $\boldsymbol{t,f}$.
\end{lemma}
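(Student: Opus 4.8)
The plan is to verify the two defining ingredients of a convex program: that the objective $E_{\text{total}}$ is a convex function of $(\boldsymbol{t},\boldsymbol{f})$, and that the feasible set cut out by constraints (c)--(h) together with the implicit frequency box constraints is convex. With every $s_i$ held fixed, the objective separates into a sum of univariate terms (each summand depends on exactly one of $t_{u,i}$, $f_{u,i}$, $t_{d,i}$, $f_{mi}$), so I would establish convexity term by term and then invoke the fact that a sum of convex functions in disjoint variables is jointly convex.

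For the computation-energy terms $\kappa_i c_i (u_i - s_i) f_{u,i}^2$ and $\kappa_m d_m s_i f_{mi}^2$, these are quadratics in $f_{u,i}$ and $f_{mi}$ with nonnegative leading coefficients, since $u_i \ge s_i$ gives $u_i - s_i \ge 0$ and $s_i \ge 0$; hence each is convex. The nontrivial terms are the two transmission-energy terms, each of the form $C\, t\,(2^{a/t}-1)$ with $C,a>0$ constant once $s_i$ is fixed (namely $a = s_i/(\nu B)$ in the uplink term and $a = \mu s_i/B$ in the downlink term). The $-Ct$ part is linear, so the crux is to show $g(t) = t\,2^{a/t}$ is convex for $t>0$. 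I would do this either by recognizing $g$ as the perspective of the convex map $u \mapsto 2^{au}$ restricted along $x=1$, or, more directly, by computing $g''(t) = 2^{a/t}(a\ln 2)^2/t^3$, which is strictly positive for all $t>0$. Either route confirms convexity of the uplink ($t_{u,i}$) and downlink ($t_{d,i}$) energy terms, and the full objective then follows as a sum of convex functions in distinct variables.

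For the constraints, the equality (c) $\sum_j T_j = T_d$ is affine, and the inequalities (e), (g), (h) together with the frequency box constraints are all affine, hence convex. The two remaining inequalities, (d) and (f), each contain a reciprocal term, $\frac{c_i(u_i - s_i)}{f_{u,i}}$ and $\frac{d_m s_i}{f_{mi}}$ respectively. With $s_i$ fixed the numerators are nonnegative constants, and the denominators are strictly positive because the implicit bounds enforce $f_{u,i}\ge f_{i,\min}>0$ and $f_{mi}\ge f_{m,\min}>0$. Since $1/x$ is convex on $x>0$, each reciprocal term is convex in its frequency variable, while the remaining summands $t_{u,i}-T_d$ and $-T_2$ are affine; thus (d) and (f) are convex inequality constraints.

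Combining these, the objective is convex, every inequality constraint is a convex function, and the single equality constraint is affine, so the feasible set is convex and problem (P) with fixed $\boldsymbol{s}$ is a convex program in $(\boldsymbol{t},\boldsymbol{f})$. I expect the only genuine obstacle to be the convexity of the transmission-energy terms $t(2^{a/t}-1)$; everything else reduces to recognizing affine maps, nonnegatively-weighted quadratics, and the convexity of $1/x$ on the positive reals. The perspective-function viewpoint makes that hard step conceptually clean, while the explicit second-derivative sign provides a self-contained verification.
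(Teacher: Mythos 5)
Your proof is correct and follows essentially the same route as the paper's: a term-by-term verification that the quadratic frequency terms, the $t\,2^{a/t}$ transmission-energy terms (via the positive second derivative), and the $1/f$ reciprocal terms in constraints (d) and (f) are each convex, with everything else affine. The only difference is cosmetic—the paper keeps $E_u$ and $E_m$ as auxiliary variables so the objective is affine and the energy expressions sit in constraints (a)--(b), whereas you substitute them directly into the objective; your computation of the second derivative $2^{a/t}(a\ln 2)^2/t^3$ is in fact more careful than the paper's (which drops the $(\ln 2)^2$ factor, though the sign conclusion is unaffected).
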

\begin{proof}
Proof follows by examining each constraint and showing that with fixed $s_i$, it is a convex function. Details in Appendix A.
\end{proof}
\section{Optimal Solution and Algorithms}\label{SolAlgo}
While problem (P) is not convex in all the optimizing variables, Lemma \ref{lemma2} shows that by fixing the offloaded bits $\boldsymbol{s}$, the problem is convex in all the remaining optimizing variables with a convex objective function and a convex feasible set. We can therefore divide problem (P) into two sub-problems: problem (P1) solves for the optimal balance between offloaded bits and those retained at the users, while (P2) solves for the remaining optimizing variables for a fixed number of offloaded bits $\boldsymbol{s}$. Since (P2) is convex, any algorithm which solves a convex problem can be applied, however, standard convex-solvers are often inefficient due to their inability to exploit the specific problem structure. We therefore analyze the problem in detail in both the primal and dual spaces to provide insight into the problem structure and propose a customized nested algorithm to solve problem (P) efficiently. 

The nested algorithm structure for solving (P) is shown in Figure~\ref{nested_algo} and the proposed algorithm works as follows. We first initialize the offloaded bits $\boldsymbol{s}$ and also initialize the dual variables. At the current value of $s$, the inner algorithm is executed, for which we use a primal-dual approach employing a subgradient method to solve sub-problem (P2). At each iteration of the inner algorithm, the current values of the dual variables are used to calculate the primal variables as stated in Theorems \ref{theorem1} and \ref{theorem2} below and also to determine the value of the dual function, then the dual variables are updated according to their respective subgradients. This process is repeated until the stopping criterion for the dual problem is satisfied, at which point the inner algorithm returns the control to the outer algorithm. Based on the newly updated primal solution for (P2) from the inner algorithm, we proceed to updating $\boldsymbol{s}$ for the next iteration of the outer algorithm, using a descent method while keeping in check the latency constraint. These steps for outer-inner optimization are repeated until a minimum point for the weighted total energy consumption is reached where all the constraints in the original problem (P) are satisfied.

The proposed nested algorithm is designed in a way to support its possible implementation in a real-time network scenario. This would imply that the algorithm is adaptable to changes in the network. For example, if a new user joins the network, or a current user leaves the network, the input to the outer optimization algorithm changes, and it forwards the updated network parameters to the inner optimization algorithm when making the function call. Therefore, a change in the network directly warrants an updated optimal solution, with no need for any change in the underlying two algorithms for solving (P1) and (P2). We now proceed to deriving the optimal solution for these two sub-problems, or equivalently for the problem (P).

\subsection{Latency-Aware Descent Algorithm for Outer Optimization}\label{outer_opt}
\begin{algorithm}[t]
\caption{Solution for Energy Minimization Problem (P)} 
\textbf{Given:} Distances $d_{i} \forall i$. Channel $\boldsymbol{H = G^{T}}$. Precision, $\epsilon_1, \epsilon_2$, Data amount $u_i$, Latency $T_d$\\
\textbf{Initialize:} Primal variable $s_i$.\\
\textbf{Begin Latency-Aware Descent Algorithm for (P1)}\\
\textbf{Given} a starting point $\boldsymbol{s}$\\
\textbf{Repeat}
\begin{enumerate}
\item Compute $\Delta \boldsymbol{s}$
\item Call the inner optimization algorithm, Algorithm 2
\item \textit{Line search}. Choose step size $t_i$ for each user via backtracking line search.
\item \textit{Update}. $s_i := s_i + t_i\Delta s_i$.
\end{enumerate}
\textbf{Until} stopping criterion is satisfied with $\epsilon_1$ or latency constraint $T_d$ is met.\\
\textbf{End Latency-Aware Descent Algorithm}
\end{algorithm}

Based on Lemma (\ref{lemma1}), and the associated discussion, we know that for a typical network setting, the objective function $f_0$ in problem (P) is monotonically increasing in $s_i$. In this case, we can therefore set up an iterative algorithm for solving subproblem (P1) to find the optimal $\boldsymbol{s}$, by sequentially changing $s_i$ by some $\Delta s_i$ for each user until a minimum objective is reached where all constraints of the original problem (P) are satisfied. The main constraint that is affected by decreasing $\boldsymbol{s}$ is the total latency, which is increasing with smaller $\boldsymbol{s}$. Thus we propose  a latency-aware descent algorithm, based on the standard descent-method but with modified stopping criteria. We compare two descent methods to find the optimal $\boldsymbol{s}$: the gradient-descent method and the Newton method. For both descent methods applied, we implement the structure for the outer optimization algorithm as described in Algorithm 1, in which the modified stopping criterion with latency awareness is unique to this algorithm design and is crucial in making sure the optimized solutions meet the latency constraint.

The outer algorithm works as follows. We initialize $0 < s_{i,0} < u_i$, input the simulation parameters, and update the step or search direction $\Delta \boldsymbol{s}$ as in standard gradient-descent ($\Delta {s_i} = - \nabla_{s_i} f_0(s_i)$) or Newton method ($\Delta {s_i} = - \nabla^2_{s_i} f_0(s_i)^{-1} \nabla_{s_i} f_0(s_i)$)~\cite{Boyd2004}. We then execute the inner algorithm for finding the optimal time and frequency allocation for the given value of $\boldsymbol{s}$. Next we proceed to the sequential update of $s_i$. For both descent methods, we use backtracking line search to find the step-length at the $k^{th}$ iteration as the vector $\boldsymbol{t}^{(k)}$, with $t_i^{(k)}$ as the step-length for the $i^{th}$ user, and update the offloaded bits for the next iteration as $s_i^{(k+1)} = s_i^{(k)} + t_i \Delta s_i$. We then check the stopping criteria for convergence of the outer algorithm. In this step, we introduce a novel modification to the classical stopping criterion for descent methods, which is necessary to arrive at the optimal solution for the original problem (P) as shown in the next proposition.

\begin{prop}
Given that condition (\ref{gradf0}) in Lemma 1 holds, a latency-aware termination of the descent algorithm is necessary to reach an optimal solution satisfying all constraints of the original problem (P). The latency based stopping criterion is given as
\begin{equation}\label{timestop}
T_{\text{total}} = \max \ (t_{u,i} + t_{L,i}, \sum_{j=1}^{3} T_j) \leq T_d 
\end{equation}
\end{prop}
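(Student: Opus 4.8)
The plan is to turn the monotonicity established in Lemma~\ref{lemma1} into a feasibility-on-the-boundary statement, and then argue that any termination rule blind to the latency constraint cannot land on that boundary. First I would invoke Lemma~\ref{lemma1} under condition~(\ref{gradf0}): the objective $f_0$ is strictly increasing in each $s_i$, so its unconstrained minimizer over $\{s_i \geq 0\}$ is $s_i = 0$, and both the gradient-descent direction $-\nabla_{s_i} f_0(s_i)$ and the Newton direction point toward decreasing $s_i$. The decisive observation is that in Case~I the gradient $\nabla_{s_i} f_0(s_i)$ is strictly positive throughout the interior, so the classical stopping quantities (gradient norm, Newton decrement) never vanish; a descent governed solely by them would push every $s_i$ all the way to $0$.

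Second, I would show that small $\boldsymbol{s}$ is infeasible for (P), so the feasible set is bounded below. Decreasing $s_i$ raises the locally computed load $q_i = u_i - s_i$, hence raises the minimal achievable local time $t_{L,i} = c_i(u_i - s_i)/f_{u,i}$ even when $f_{u,i}$ is pushed to its upper bound $f_{i,\max}$. Because constraint (c) pins the three-phase timeline to $\sum_{j=1}^{3} T_j = T_d$, the branch $t_{u,i} + t_{L,i}$ of constraint (d) is the quantity that grows as offloading is withdrawn. Using the inner energy-optimal allocation (Theorems~\ref{theorem1} and~\ref{theorem2}), I would argue that the minimal attainable value of $t_{u,i} + t_{L,i}$ is nonincreasing in $s_i$, so $T_{\text{total}}(\boldsymbol{s}) = \max(t_{u,i} + t_{L,i},\, \sum_{j=1}^{3} T_j)$ is nonincreasing in each $s_i$ and there is a threshold below which $T_{\text{total}}(\boldsymbol{s}) > T_d$.

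Third, I would combine the two facts. Since $f_0$ decreases monotonically toward the infeasible region while the feasible set $\{\boldsymbol{s} : T_{\text{total}}(\boldsymbol{s}) \leq T_d\}$ is bounded below by the latency threshold, the constrained minimizer $\boldsymbol{s}^\star$ must lie on the active boundary $T_{\text{total}}(\boldsymbol{s}^\star) = T_d$ (the only exception being the degenerate case where $\boldsymbol{s}=0$ is already feasible). Consequently the descent must be halted exactly when the round-trip latency reaches $T_d$; the classical criterion would either overshoot $\boldsymbol{s}^\star$ into infeasibility or fail to terminate, whereas the rule in~(\ref{timestop}) stops precisely at the boundary. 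This establishes the necessity of latency-aware termination.

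The step I expect to be the main obstacle is the rigorous monotonicity of $T_{\text{total}}$ in $\boldsymbol{s}$. The branch $t_{u,i} + t_{L,i}$ is a difference of opposing trends, as the offload time shrinks while the local time grows when $s_i$ falls, so monotonicity cannot be read off termwise and must be routed through the inner optimizer: one has to show that the energy-optimal time and frequency allocation for a smaller $s_i$ cannot drive the bottleneck timeline below what a larger $s_i$ already achieves. I would close this by appealing to the optimal structure in Theorems~\ref{theorem1} and~\ref{theorem2} together with the cap $f_{u,i} \leq f_{i,\max}$, which bounds how fast local computation can be accelerated and hence forces the latency to grow once the local load dominates the offloading savings.
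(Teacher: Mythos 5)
Your proposal is correct and follows essentially the same route as the paper's own proof: invoke Lemma 1's monotonicity so that any unconstrained descent drives every $s_i$ toward zero, observe that shrinking $s_i$ inflates the local-computation time $t_{L,i}$ until the round-trip latency exceeds $T_d$, and conclude that for large requests the constrained optimum sits on the boundary $T_{\text{total}} = T_d$, making latency-aware termination necessary. The only difference is one of rigor: the paper argues the latency violation informally (via the linear growth of $t_{L,i}$ in $q_i = u_i - s_i$ and a supporting figure), whereas you explicitly flag---and propose to close through the inner-optimizer structure of Theorems 1--2 and the cap $f_{u,i} \leq f_{i,\max}$---the opposing-trends issue in $t_{u,i} + t_{L,i}$, a point the paper's proof glosses over.
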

\begin{proof}
We have two stopping criteria for the algorithm termination; the first is specific to the descent method applied and is defined by the suboptimality condition, $f_0(x) - p^\star \leq \epsilon_1$, where $p^\star$ is the optimal solution~\cite{Boyd2004}, while the second is as given in (\ref{timestop}) derived from the delay constraints (c-d) in (\ref{Pnew}).

For the considered system-level energy minimization problem, where $w \neq 0$, Lemma~\ref{lemma1} always holds and the primal objective function $f_0(s_i)$ is a monotonically increasing function in $s_i$ as shown on the left y-axis in Figure~\ref{stop_time} for the $i^{th}$ user. The energy is hence minimized as $s_i$ approaches zero. While this may be optimal for smaller data requests $u_i$ such that all data is computed locally, for larger data requests, however, computing all data locally can be time-inefficient. This is because the time for local computation which is proportional to $q_i = u_i-s_i$ as in (\ref{t_Li}) increases linearly with $q_i$ (or equivalently, it increases linearly as $s_i$ decreases) and may exceed the delay constraint, that is $\max(t_{u,i} + t_{L,i}) > T_d$ as shown in Figure \ref{stop_time} on the right y-axis. Here for small $s_i$, the total time $T_{\text{total}}$ exceeds the latency constraint, due to large $q_i$. For such scenarios, the optimal $s_i$ is then found as the point where the time constraint is met with equality, that is, $T_{total} = T_d$ as shown, where $s_i^\star$ becomes the amount of data that can be offloaded such that the system's energy consumption is minimized within the delay constraint. Hence latency-aware termination of the descent algorithm in the outer optimization problem becomes necessary for large data requests, such that the delay constraints for the original problem (P) are satisfied. 
\end{proof}
\setlength{\belowcaptionskip}{-20pt}
\begin{figure}[!tbp]
  \centering
  \begin{minipage}[b]{0.45\textwidth}
    \includegraphics[scale = 0.45]{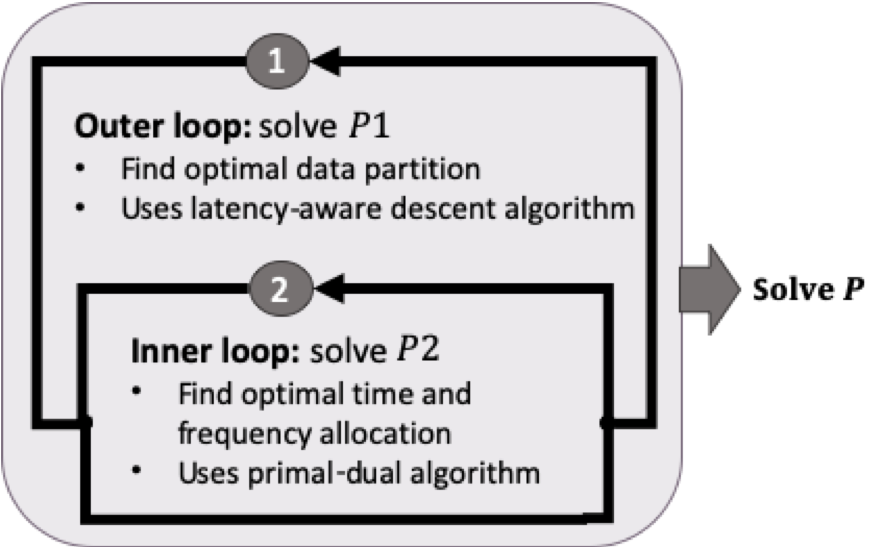}
    \caption{Nested algorithm architecture for the solution of (P)}
    \label{nested_algo}
  \end{minipage}
  \hfill
  \begin{minipage}[b]{0.5\textwidth}
    \includegraphics[scale = 0.45]{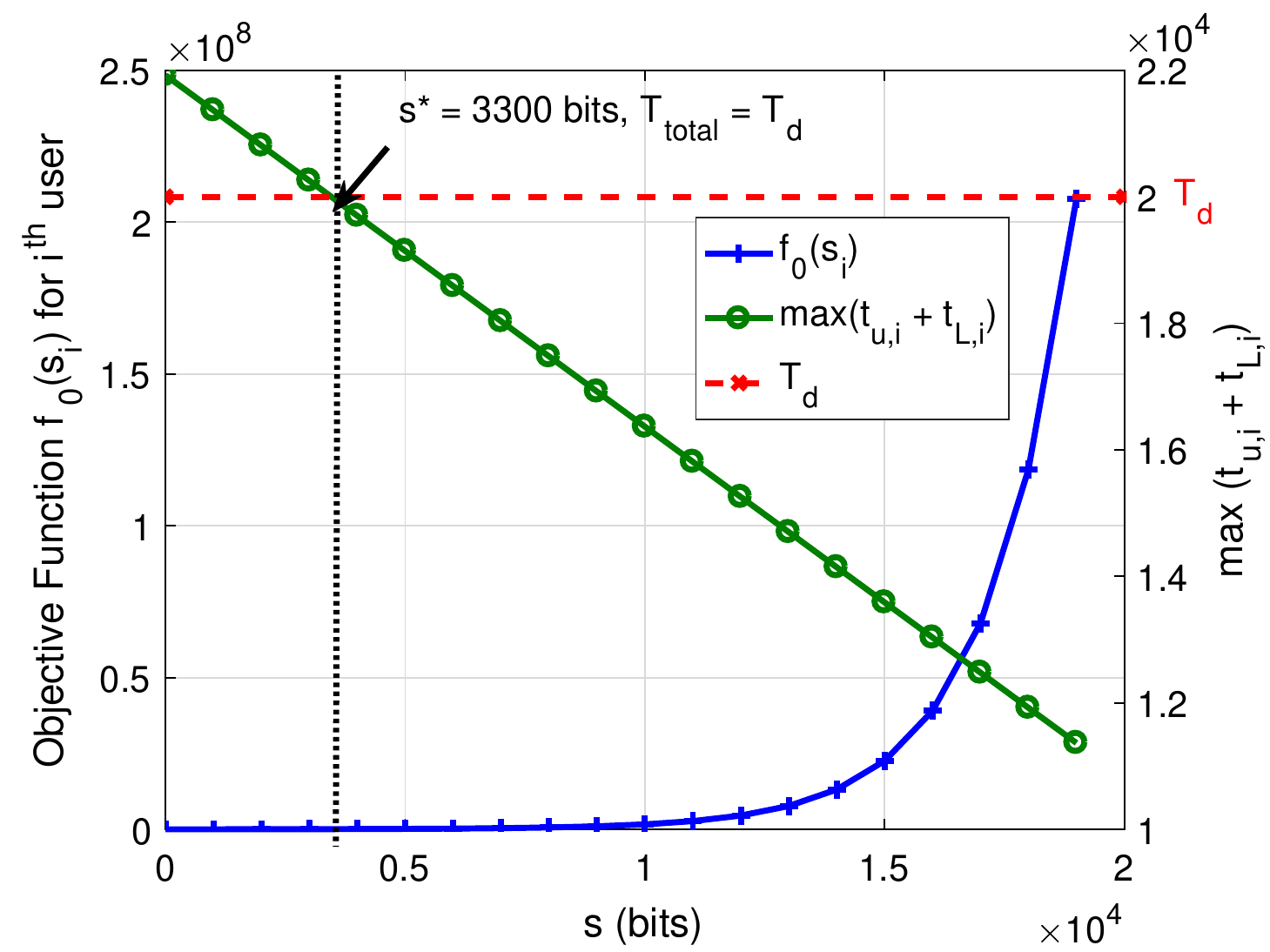}
    \caption{Primal objective function monotonically increases w.r.t $s_i$}
    \label{stop_time}
  \end{minipage}
\end{figure}

\subsection{A Primal-Dual Algorithm for Inner Optimization}
Based on Lemma (\ref{lemma2}), problem (P) is convex for a fixed $\boldsymbol{s}$. Consider the subproblem (P2) to solve for $\boldsymbol{t,f}$; this problem is convex and we can show that strong duality holds since Slater's condition is satisfied, that is, we can find a strictly feasible point in the relative interior of the domain of the problem where the inequality constraints hold with strict inequalities. Therefore, to solve for (P2), we formulate a primal-dual problem using the Lagrangian dual method. At each iteration of the outer algorithm discussed above, that is for a fixed $s_i$, we solve a primal-dual problem for the remaining variables in (P2) using Lagrangian duality analysis. 

Theorem \ref{theorem1} below provides the optimal time allocation for uplink and downlink transmissions in terms of the dual variables. It provides a solution for the time required per user to offload data to the MEC and the time consumed by the MEC to compute each user's tasks.
\setcounter{equation}{17}
\begin{theorem}\label{theorem1}
The offloading and downloading time, $t_{u,i}$ and $t_{d,i}$ respectively, can be obtained as
\begin{align*}\label{x1x2}
    &t_{u,i} = \left(\frac{\nu B}{s_i \ln 2} \left(W_0 \left(\frac{\beta_i + \xi_i}{(1 - w)} \left( \frac{N \gamma_i}{\sigma_{1,i}^2 \Gamma_{1} e}\right) - \frac{1}{e}\right) + 1 \right)\right)^{-1} \tag{\theequation a}
 \end{align*}
 \begin{align*}
    &t_{d,i} = \left(\frac{B}{\mu s_i \ln 2} \left(W_0 \left(\frac{\phi_i}{w} \left( \frac{N \gamma_i}{\sigma_{2,i}^2 \Gamma_{2} e}\right) - \frac{1}{e}\right) + 1 \right)\right)^{-1} \tag{\theequation b}
\end{align*}
Here $\xi_i$, $\beta_i$ and $\phi_i$ are the dual variables associated with the constraints (d), (e) and (g) for problem (P) in~(\ref{Pnew}) respectively.
\end{theorem}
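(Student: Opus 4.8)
The plan is to derive the two formulas as stationarity conditions of the Lagrangian of subproblem (P2), which by Lemma~\ref{lemma2} is convex for fixed $\boldsymbol{s}$. Since Slater's condition holds (as noted in the text), strong duality applies and the KKT conditions are necessary and sufficient for optimality; hence it suffices to set the partial derivatives of the Lagrangian with respect to $t_{u,i}$ and $t_{d,i}$ to zero and solve for the primal variables in terms of the duals. First I would form the partial Lagrangian and retain only the terms depending on $t_{u,i}$: the uplink energy term $(1-w)\frac{t_{u,i}(2^{s_i/(\nu t_{u,i}B)}-1)\Gamma_1\sigma_{1,i}^2}{N\gamma_i}$ from the objective, plus $\xi_i$ times constraint (d) and $\beta_i$ times constraint (e), each of which contributes a $+t_{u,i}$ term. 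Note that $T_1,T_2,T_3$ enter only the constraints, so the objective's sole dependence on $t_{u,i}$ is through the energy term.

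The key step is differentiating the energy term. Setting $x\triangleq\frac{s_i}{\nu B\,t_{u,i}}$ so the exponential reads $2^{x}$ and using $\frac{d}{dt_{u,i}}\!\big[t_{u,i}(2^{s_i/(\nu t_{u,i}B)}-1)\big]=2^{x}-1-x\ln2\,\,2^{x}$, the stationarity condition collapses to $(1-w)\frac{\Gamma_1\sigma_{1,i}^2}{N\gamma_i}\big[2^{x}(1-x\ln2)-1\big]+\xi_i+\beta_i=0$. The central idea is to recognise this transcendental equation as a disguised Lambert-$W$ equation. Substituting $y\triangleq x\ln2=\frac{s_i\ln2}{\nu B\,t_{u,i}}$ and $2^{x}=e^{y}$ and rearranging gives $(y-1)e^{y}=D-1$ with $D\triangleq\frac{(\xi_i+\beta_i)N\gamma_i}{(1-w)\Gamma_1\sigma_{1,i}^2}$; multiplying by $e^{-1}$ puts it in canonical form $(y-1)e^{\,y-1}=\frac{D-1}{e}$, whence $y-1=W_0\!\big(\frac{D-1}{e}\big)$. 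Back-substituting $y=\frac{s_i\ln2}{\nu B\,t_{u,i}}$ and solving for $t_{u,i}$ reproduces the claimed expression. The downlink formula follows by the identical argument on the downlink energy term (weight $w$, parameters $\Gamma_2,\sigma_{2,i}^2,\mu$), where only the multiplier $\phi_i$ of constraint (g) enters since $t_{d,i}$ appears in no other constraint.

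I expect the main obstacle to be twofold. The first is the bookkeeping that recasts $2^{x}(1-x\ln2)-1=-D$ into the exact canonical form $Xe^{X}=Z$: the $-\frac1e$ offset and the factor $e$ must be tracked carefully so that the argument emerges precisely as $\frac{\beta_i+\xi_i}{1-w}\big(\frac{N\gamma_i}{\sigma_{1,i}^2\Gamma_1 e}\big)-\frac1e$. The second is justifying the principal branch $W_0$ over $W_{-1}$: because the dual variables are nonnegative, $D\ge0$, so the argument $\frac{D-1}{e}\ge-\frac1e$ lies in the real domain of $W_0$, and the bound $W_0(\cdot)\ge-1$ forces $y=W_0(\cdot)+1\ge0$, i.e. $t_{u,i}>0$, the only physically admissible root. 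Convexity of (P2) from Lemma~\ref{lemma2} then certifies that this unique stationary point is the global optimum.
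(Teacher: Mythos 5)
Your proof is correct and follows essentially the same route as the paper: KKT stationarity of the Lagrangian with respect to $t_{u,i}$ and $t_{d,i}$ (with only the energy term and the multipliers $\xi_i,\beta_i$, resp.\ $\phi_i$, contributing), followed by inversion of the resulting transcendental equation via the Lambert $W_0$ function. The only difference is cosmetic: the paper invokes a cited inversion formula for $y = f(x) - x f'(x)$, whereas you derive it explicitly via the substitution $y = x\ln 2$ and, as a bonus, justify selecting the principal branch $W_0$ over $W_{-1}$ using nonnegativity of the dual variables, a point the paper leaves implicit.
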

\begin{proof}
Applying Karush-Kuhn-Tucker (KKT) conditions with respect to offloading time $t_{u,i}$ and downloading time $t_{d,i}$, respectively, we obtain equations of the form
\begin{align*}\label{KKT_time}
&(1 - w) \left( f(x_{1,i}) - x_{1,i} f'(x_{1,i}) \right) + \beta_i + \xi_i = 0 \\
&w (f(x_{2,i}) - x_{2,i} f'(x_{2,i})) + \phi_i = 0
\end{align*}
where $x_{1,i} = \frac{1}{t_{u,i}}$, $f(x_{1,i}) = \frac{\big(2^{\frac{s_i}{\nu t_{u,i} B}} - 1 \big) \Gamma_{1} \sigma_{1,i}^2}{N \gamma_i}$, $x_{2,i} = \frac{1}{t_{d,i}}$ and $f(x_{2,i}) = \frac{\big(2^{\frac{\mu s_i}{t_{d,i} B}} - 1 \big) \Gamma_{2} \sigma_{2,i}^2}{N \gamma_i}$. For the function of the form $f(x) = \sigma^2 (2^{\frac{x}{B}} - 1)$ and $y = f(x) - x f'(x)$ of $x > 0$, its inverse can be shown to be obtained from the principal branch of the Lambert $W$ function, $W_0$ as~\cite{lambert1996}
\begin{equation}\label{LambertSol}
x = \frac{c B}{\ln 2} \left(W_0 \left(\frac{-y}{\sigma^2 e} - \frac{1}{e}\right) + 1 \right)
\end{equation}
Details of the proof provided in Appendix B.
\end{proof}
We now proceed to deriving the optimum frequency allocation. Theorem \ref{theorem2} below provides a solution for the each user's CPU frequency for local data computation, and the frequency allocated at the MEC for coomputing each user's offloaded tasks. It is worth mentioning that the optimal primal solution derived in Theorems \ref{theorem1} and \ref{theorem2} is specific for the considered system. Thus the proposed primal-dual approach for solving (P2) reveals the optimal solution structure which would otherwise be obscured by plugging into a standard convex solver.

\begin{theorem}\label{theorem2}
The optimal CPU frequency at the user ($f_{u,i}$) and at the MEC ($f_{m,i}$) can be obtained in closed form from the cubic equations below

\vspace{2mm}
\begin{subequations}
\begin{minipage}{0.4\textwidth}
  \begin{equation}
    \label{difffi}
      f_{u,i}^\star = \left( \frac{\xi_i}{2 (1 - w) \kappa_i} \right)^{\frac{1}{3}}
  \end{equation}
\end{minipage}
\begin{minipage}{0.6\textwidth}
  \begin{equation}
   \label{difffmi}
    2 w \kappa_m d_m s_i f_{m,i}^3  + \lambda_5 f_{m,i}^2 - \theta_i d_m s_i = 0
  \end{equation}
  \end{minipage}
  \vspace{0.00001mm}
\end{subequations}

\noindent where $\theta_i$ and $\lambda_5$ are the dual variables for constraints (f) and (h) respectively.
\end{theorem}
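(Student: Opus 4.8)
The plan is to exploit the convexity established in Lemma~\ref{lemma2} and apply the KKT stationarity conditions of subproblem (P2) directly with respect to the frequency variables. Since (P2) is convex and Slater's condition holds (as noted just above the theorem), the KKT conditions are both necessary and sufficient, so any stationary point of the Lagrangian that is primal and dual feasible is globally optimal. First I would write the Lagrangian of (P2), attaching the multipliers named in the statement: $\xi_i$ to constraint (d), $\theta_i$ to constraint (f), and $\lambda_5$ to constraint (h), and retaining only the terms that actually depend on the frequency variables.

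For $f_{u,i}$, the only objective contribution is the local-computation energy $(1-w)\kappa_i c_i(u_i - s_i) f_{u,i}^2$ inside $E_u$, and the only constraint in which $f_{u,i}$ appears is the per-user latency constraint (d) through the term $\xi_i\, c_i(u_i-s_i)/f_{u,i}$. Setting $\partial L/\partial f_{u,i}=0$ then gives
\begin{equation*}
2(1-w)\kappa_i c_i (u_i-s_i)\, f_{u,i} - \xi_i\, \frac{c_i(u_i-s_i)}{f_{u,i}^2} = 0 .
\end{equation*}
Cancelling the common factor $c_i(u_i-s_i)$ and rearranging yields $2(1-w)\kappa_i f_{u,i}^3 = \xi_i$, whose unique positive root is $f_{u,i}^\star = \big(\xi_i /(2(1-w)\kappa_i)\big)^{1/3}$, matching~(\ref{difffi}).

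For $f_{m,i}$, the objective contributes the MEC computation energy $w\kappa_m d_m f_{m,i}^2 s_i$, while the constraints contributing are (f), through $\theta_i\, d_m s_i/f_{m,i}$, and the sum-frequency cap (h), through $\lambda_5 f_{m,i}$. Differentiating the Lagrangian, $\partial L/\partial f_{m,i} = 2 w\kappa_m d_m s_i f_{m,i} - \theta_i d_m s_i/f_{m,i}^2 + \lambda_5 = 0$, and multiplying through by $f_{m,i}^2$ produces exactly the cubic~(\ref{difffmi}), namely $2 w\kappa_m d_m s_i f_{m,i}^3 + \lambda_5 f_{m,i}^2 - \theta_i d_m s_i = 0$. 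Since the leading coefficient is positive and the constant term is nonpositive, the cubic has a single positive real root, obtainable in closed form via Cardano's formula; this is what ``closed form'' refers to in the statement.

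The routine calculus is straightforward; the main care is the bookkeeping. The key step that must be handled correctly is the precise accounting of which primal constraints each frequency variable enters, so that the correct multipliers ($\xi_i$ for (d), $\theta_i$ for (f), $\lambda_5$ for (h)) appear and no spurious terms are introduced; in particular $f_{u,i}$ must be verified not to appear in any constraint other than (d), and $f_{m,i}$ checked against both (f) and (h). A secondary point is to confirm that the implicit box bounds $f_{i,\min}\le f_{u,i}\le f_{i,\max}$ and $f_{m,\min}\le f_{m,i}\le f_{m,\max}$ are either inactive at the stationary point or enforced by clipping the closed-form root to the feasible interval, so that the interior expressions~(\ref{difffi})--(\ref{difffmi}) indeed deliver the optimum.
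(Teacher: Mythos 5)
Your proposal is correct and follows essentially the same route as the paper's Appendix C: applying KKT stationarity to the Lagrangian with $\xi_i$, $\theta_i$, $\lambda_5$ attached to constraints (d), (f), (h), cancelling $c_i(u_i-s_i)$ to get the cube-root expression for $f_{u,i}^\star$, and multiplying the $f_{m,i}$ stationarity condition by $f_{m,i}^2$ to obtain the cubic (the paper merely folds the weights $(1-w)$ and $w$ in via multipliers on constraints (a)--(b), which resolve to the same coefficients you use). Your added remarks on the uniqueness of the positive root and on clipping to the box bounds match the paper's brief note that the chosen root is the one satisfying the boundary conditions.
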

\begin{proof}
Obtained directly by applying KKT conditions with respect to $f_{u,i}$ and $f_{m,i}$. The chosen root for the cubic equation is that which satisfies the boundary conditions. See Appendix C.
\end{proof}

Theorems \ref{theorem1} and \ref{theorem2} provide the optimal solution of the primal variables in terms of the dual variables. We can use them to design a primal-dual algorithm to solve the convex optimization problem (P2) in (\ref{Pnew}) for a fixed $\boldsymbol{s}$. The dual-function for this problem can be defined as
\begin{equation}\label{DualFn}
    g(\lambda_1,\lambda_5, \boldsymbol{\beta, \phi, \xi, \theta}) = \underset{\boldsymbol{t,f}}{\inf} \mathcal{L}(\boldsymbol{t,f}, \lambda_1,\lambda_5, \boldsymbol{\beta, \phi, \xi, \theta})
\end{equation}
where $\mathcal{L}$ is the Lagrangian for problem (P2) defined in (\ref{LDual}) and the dual-problem is given as
\vspace{-2mm}
\setcounter{equation}{21}
\begin{align*}\label{PDual}
    \text{P-dual: }\max \ &g(\lambda_1,\lambda_5, \boldsymbol{\beta, \xi, \theta, \phi})\\
    \text{s.t. } &\lambda_1, \lambda_5, \beta_i, \phi_i, \xi_i, \theta_i \geq 0 \ \forall i = 1...K \tag{\theequation}
\end{align*}

The Lagrangian dual $\mathcal{L}$ has no closed-form, so we use a subgradient approach to solve the dual minimzation problem~\cite{Boyd2003}. We design a primal-dual algorithm which iteratively updates the primal and dual variables until reaching a target accuracy. We use the optimal primal solutions in Theorems \ref{theorem1} and \ref{theorem2} to obtain the dual function, $g(\boldsymbol{x})$, as given in (\ref{DualFn}). The problem then becomes maximizing this dual function in terms of the dual variables. The subgradient terms with respect to all dual variables are as follows.
\setcounter{equation}{22}
\begin{align*}\label{subgrads}
&\nabla_{\lambda_1}\mathcal{L} = \sum_{j=1}^{3} T_j - T_{\text{delay}}, \ \ \nabla_{\beta_i}\mathcal{L} = t_{u,i} - T_1, \ \  \nabla_{\xi_i}\mathcal{L} = \frac{c_i q_i}{f_{u,i}}  + t_{u,i} - T_{\text{delay}} \tag{\theequation a-c}
\end{align*}
\begin{align*}
&\nabla_{\theta_i}\mathcal{L} = \frac{d_m s_i}{f_{m,i}} - T_2, \ \ \ \ \ \ \nabla_{\phi_i}\mathcal{L} = t_{d,i} - T_3, \ \ \nabla_{\lambda_5}\mathcal{L} = \sum_{i=1}^{K} f_{m,i} - f_{m,\max}\tag{\theequation d-f}
\end{align*}

For our implementation, we update the dual variables based on the shallow-cut ellipsoid method using the sub-gradient expressions in (\ref{subgrads}-f). The sub-gradient in the ellipsoid algorithm is calculated at the ellipsoid center, $x = (\lambda_1,\lambda_5, \boldsymbol{\beta, \xi, \theta, \phi})$, to reach the minimum volume ellipsoid containing the minimizing point for the dual-function $g(\lambda_1,\lambda_5, \boldsymbol{\beta, \phi, \xi, \theta})$. For each iteration, the primal variables updates are based on Theorems \ref{theorem1} and \ref{theorem2}, and a new value for $g(\lambda_1,\lambda_5, \boldsymbol{\beta, \phi, \xi, \theta})$ is calculated. Since the ellipsoid algorithm is not a descent method, we keep track of the best point for the dual function $g(\lambda_1,\lambda_5, \boldsymbol{\beta, \phi, \xi, \theta})$ in (\ref{DualFn}) at each iteration of the inner algorithm. These primal-dual update steps are repeated until the desired level of precision, $\epsilon_2$, is reached for the stopping criterion, which is the minimum volume of the ellipsoid in our algorithm. The steps for the primal-dual algorithm are shown in Algorithm 2.

\begin{algorithm}[t]
\caption{Solution for Inner Optimization Problem (P2)}
\textbf{Given} a starting point $\boldsymbol{s}$\\
\textbf{Initialize:} Dual variables, $\lambda_1,\lambda_5$, $\beta_i, \xi_i, \theta_i, \phi_i \forall i$.\\
\textbf{Begin Primal-Dual Algorithm for (P2)}
\begin{itemize}[leftmargin=*]
\item Calculate $f_{u,i}^\star $ and $f_{mi}^\star \ \forall i = 1...K$ from (\ref{difffi}) and (\ref{difffmi}) respectively. For any $i^{\text{th}}$ user, \\
$\circ$ \textbf{If} $f_{mi}^\star < f_{m,\min}$, apply boundary condition, \textbf{then} $f_{mi}^\star = f_{m,\min}$ \\
$\circ$  \textbf{If} $f_{u,i}^\star < f_{\min}$, OR $f_{u,i}^\star > f_{\max}$, \textbf{then} apply boundary conditions, $f_{u,i}^\star = f_{\min}$ OR $f_{u,i}^\star = f_{\max}$. 
\item Calculate $t_{u,i}$ and $t_{d,i}$, using (\ref{x1x2}-b). Then $T_1^\star = \max t_{u,i}^\star$ and $T_3^\star = \max t_{d,i}^\star$.
\item Update $p_i$ and $\eta_i$ using (\ref{poweralloc}). 
\item Using updated power values to calculate $\sigma_{1,i}^2$ and $\sigma_{2,i}^2$.
\item Calculate $t_{MEC}^\star$ from (\ref{tMEC}). Then $T_2^\star = \max t_{MEC}^\star$
\item Find dual function value, $g(\lambda_1,\lambda_5, \boldsymbol{\xi, \theta, \beta, \phi})$, in (\ref{DualFn})\\
$\circ$ If dual variables converge with $\epsilon_2$, \textbf{stop}\\
$\circ$ Else, find subgradients in (\ref{subgrads}-f), update dual-variables using ellipsoid method, \textbf{continue}
\end{itemize}
\textbf{End Primal-Dual Algorithm for (P2)}
\end{algorithm}

\subsection{Algorithm Implementation and Convergence}\label{algo_converg}
The nested algorithm for optimally solving (P) is comprised of the outer latency-aware descent algorithm, and the inner subgradient based primal-dual algorithm. These algorithms are executed at the MEC server which then distributes results to the users. Initializing the nested-algorithm only requires the user locations and the amounts of data requested, which can be shared over a control channel before the actual data communication over data traffic channels in uplink and downlink takes place. For simulations in this paper, we implemented the algorithm on a personal computer, but implementation on MEC servers with high computational capabilities can be expected to run seamlessly in a wireless fading environment.

For the inner algorithm, we use the shallow-cut ellipsoid method to update the dual variables, where convergence is guaranteed due to the convexity of problem (P2) as shown in Lemma~\ref{lemma2}. As the optimal values for the dual variables are reached in the inner algorithm, the values for the primal variables also converge to their
respective optimal values by strong duality. For the ellipsoid method, the number of iterations is proportional to the number of constraints $n$~\cite{Bland1981} since the ellipsoid volume decreases as a geometric series whose ratio depends on the dimension of the space~\cite{Goffin1983}. The convergence speed for the ellipsoid algorithm is proportional to $R \exp(-\frac{K_{\text{in}}}{2n^2})$, where $K_{\text{in}}$ is the number of iterations to reach $\epsilon_2$-optimal solution for (P2) \cite{Bubeck2015}, which requires modest computation per step of $\mathcal{O}(n^2)$~\cite{Boyd2008}.
 
The convergence of the outer algorithm depends on the descent method chosen, i.e. the gradient descent or Newton method, and the line-search method. We use the inexact backtracking line-search in our latency-aware descent algorithm due to its simplicity and effectiveness which is known to always terminate~\cite{Boyd2004}. For the standard gradient-descent method, $f_0(s_i(k))$ converges to the optimal point $p^\star$ linearly, while the Newton method has a linear start and then hits the quadratic convergence after a small number of iterations~\cite{Boyd2004}. While the Newton method can warrant faster convergence with a significantly lower number of iterations, it has higher computation cost with each Newton iteration requiring $\mathcal{O}(n^3)$ flops compared to $\mathcal{O}(n)$ flops required for gradient descent~\cite{Ryan2019}. In our latency-aware descent outer algorithm, since we add an additional stopping criterion based on the latency, the algorithm may stop earlier than the standard implementation. Therefore, we expect to see the latency-aware gradient descent to have the same linear convergence, but the latency-aware Newton method may not hit the quadratic convergence if the latency constraint is met before that.

Among all the methods employed, the convergence speed of the ellipsoid algorithm is the slowest component of the nested algorithm. The convergence speed varies with the number of constraints $n$, which is especially relevant for systems with large number of users since constraints (d)-(g) for problem (P) in (\ref{Pnew}) are per-user constraints. For a fixed number of users and consequently for a fixed number of dual-variables the rate of convergence for the ellipsoid algorithm is linear (similar to the center-of-gravity method~\cite{Bubeck2015} upon which the ellipsoid algorithm is based~\cite{Bland1981}) albeit typically at a much slower rate than gradient descent. 

At each iteration of the descent algorithm, a call is made to the ellipsoid algorithm (see Algorithm 1), and the outer algorithm moves on to the next iteration after convergence of the inner algorithm is reached for a given $\boldsymbol{s}$. Therefore, the overall number of iterations $K_{\text{tot}}$ is a product of the number of iterations $K_{\text{out}}$ and $K_{\text{in}}$ of the outer and inner algorithms respectively. For the nested algorithm with gradient descent, the convergence speed is linear, where the convergence time is dominated by the inner ellipsoid algorithm. For the nested algorithm with Newton descent method, the convergence is super-linear and sub-quadratic, with the convergence speed being closer to linear than quadratic due to the slow speed of the ellipsoid algorithm compared to the Newton method. The convergence of the nested algorithm would therefore be significantly faster when using the latency-aware Newton method compared to gradient-descent.  
\section{Numerical Results}
\setlength{\belowcaptionskip}{-15pt}
\begin{figure}[!tbp]
  \centering
  \begin{minipage}[b]{0.4\textwidth}
    \includegraphics[scale=0.4]{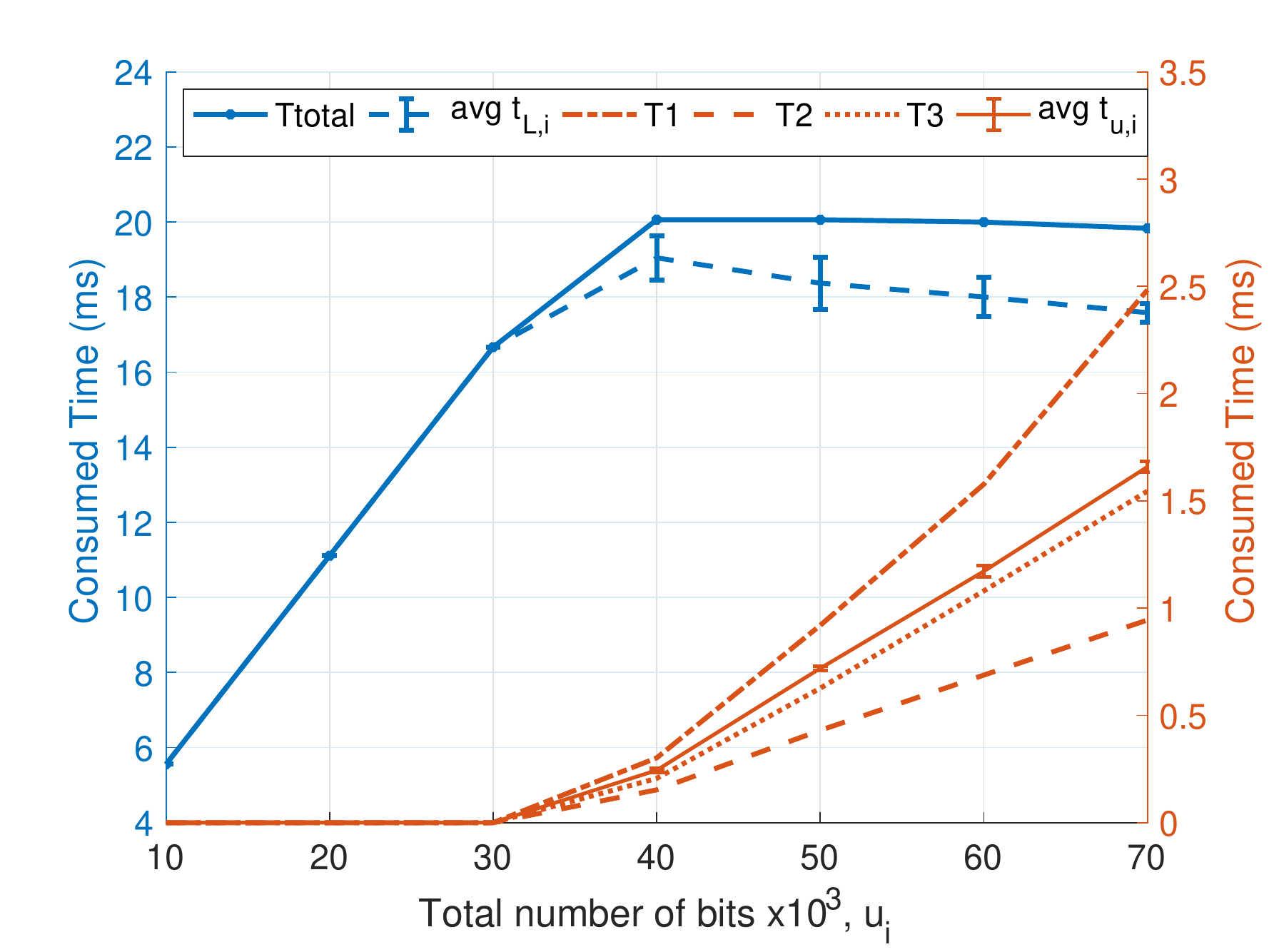}
    \caption{Consumed total time and the percentage of time spent in each phase.}
    \label{time_figure}
  \end{minipage}
  \hfill
  \begin{minipage}[b]{0.5\textwidth}
    \includegraphics[scale=0.4]{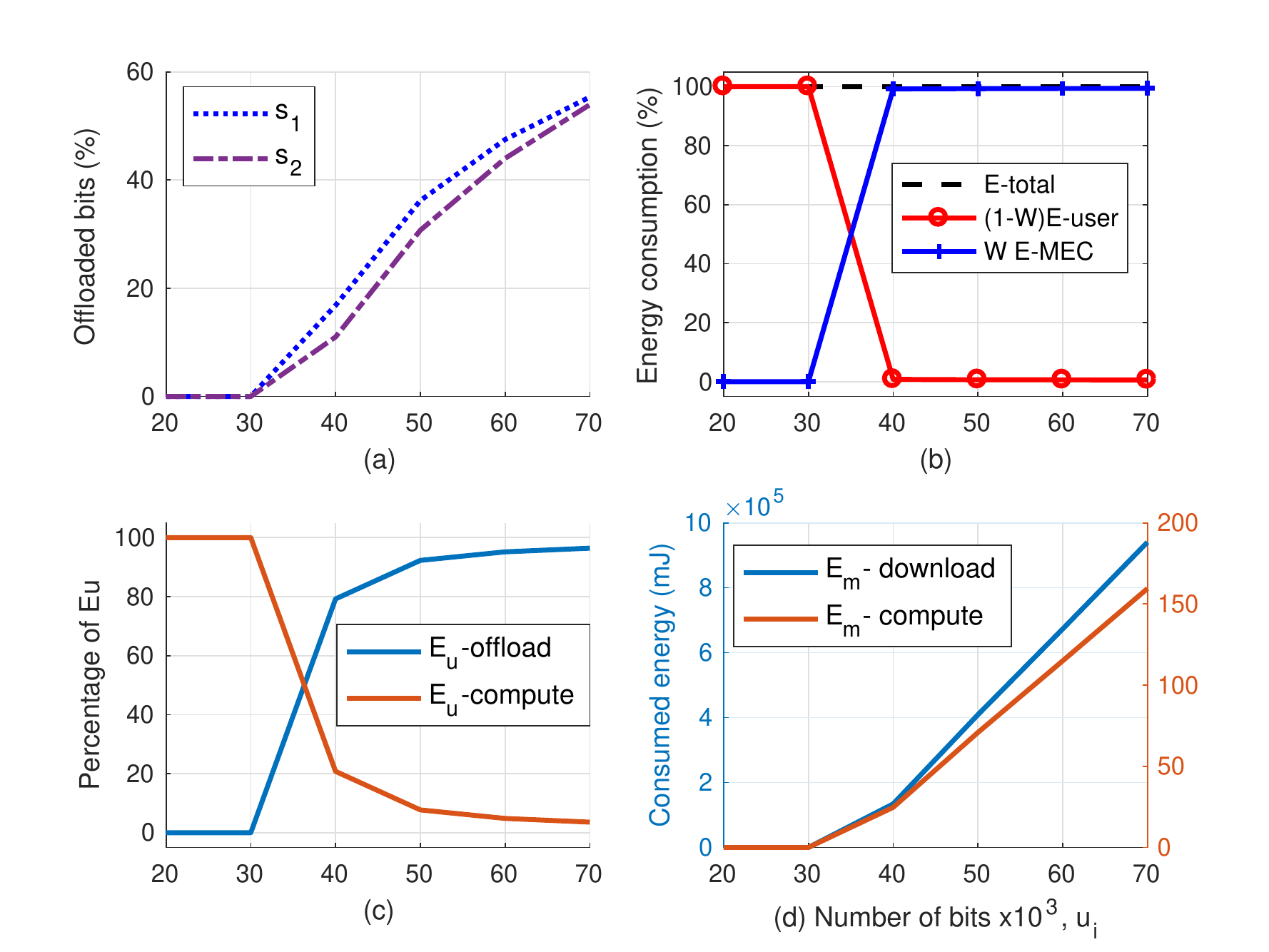}
    \caption{Percentage offloaded data (a), total energy consumption (b), transmission/computation energy consumption at users (c) and MEC (d)}
    \label{bits_all}
  \end{minipage}
\end{figure}

In this section, we evaluate the solution of energy minimization problem (P) with respect to energy consumption, time required, and the partition of bits offloaded to the MEC for computation. We consider a $20\text{m}\times20\text{m}$ area with 4 APs and 16 users randomly located with $K = 4$ users per AP's coverage area and $N = 100$ as shown in Figure~\ref{system_model}. For simulations, $w = 10^{-3}$, $T_d = 20$ms (in accordance with the AR/VR applications requirement for motion-to-photon latency~\cite{Intel2017}), $B = 5$MHz, $\tau_c = B T_d$, $\Gamma_1 = \Gamma_2 = 1.25$, $\mu = 2$, $\kappa_i = 0.5$pF, $\kappa_m = 5$pF, $c_i = 1000$, $d_m = 500$, $\gamma = 2.2$, $\sigma = 2.7$dB, $\sigma_r^2 = -127$dBm, $\sigma_k^2 = -122$dBm, $(f_{\min}, f_{\max}, f_{m,\min}) = (60,1800,2200)$ MHz. Each MEC processor has 24 cores with maximum frequency of $3.4$GHz. For initialization of Algorithm 1, any feasible value $0 \leq s_i \leq u_i$ can be chosen which satisfies the constraints for power and latency. For our numerical simulations we start with $s_i  = 0.6 u_i \ \forall i$. Transmit power available at user and AP is 23 dBm and 46 dBm respectively. To calculate the interference and noise power in (\ref{INul}) and (\ref{INdl}) which include massive MIMO pilot contamination and intercell interference, we assume that user terminals transmit at their maximum power, that is $p_{qi} = 23$dBm, and the interfering APs use equal power allocation in the downlink, that is $\eta_{qi} = \frac{1}{K} \ \forall i$. Numerical results are averaged over 1000 independent channel realizations of $\mathbf{H}$ and $\mathbf{G}$.
\vspace{-1mm}
\subsection{Effect of the Amount of Data for Computation}
Figure~\ref{time_figure} shows the total time consumption and time consumed per phase as the amount of requested data increases. We use $u_i = u \ \forall i \in [1,K]$. For low data requests, $u < 40$kbits, the total time consumption is always less than $T_d$. For $u > 40$kbits, however, the consumed time becomes a limiting factor and the energy is minimized such that the latency constraint is met with equality. Here $T_{total}$ is as given in (\ref{timestop}). We also show the breakdown for time consumed in each phase, where Phase II consumes the minimum time due to the MEC's high CPU frequency. The offloading time $T_1$ is more than the downloading time $T_3$ due to the difference in user and AP transmit powers even though we assume that the computed results $\tilde{s}_i = 2 s_i$. The average time for local computation at users is much higher than $T_2$ because of lower processing speed at the users. For larger data requests, the time $\max(t_{u,i}+t_{L,i})$ in (\ref{timestop}) becomes equal to $T_d$ and leads to the termination of the latency-aware descent algorithm.
\setlength{\belowcaptionskip}{-20pt}
\begin{figure}[t]
        \begin{minipage}[b]{0.4\textwidth}
                \centering
                \includegraphics[scale = 0.35]{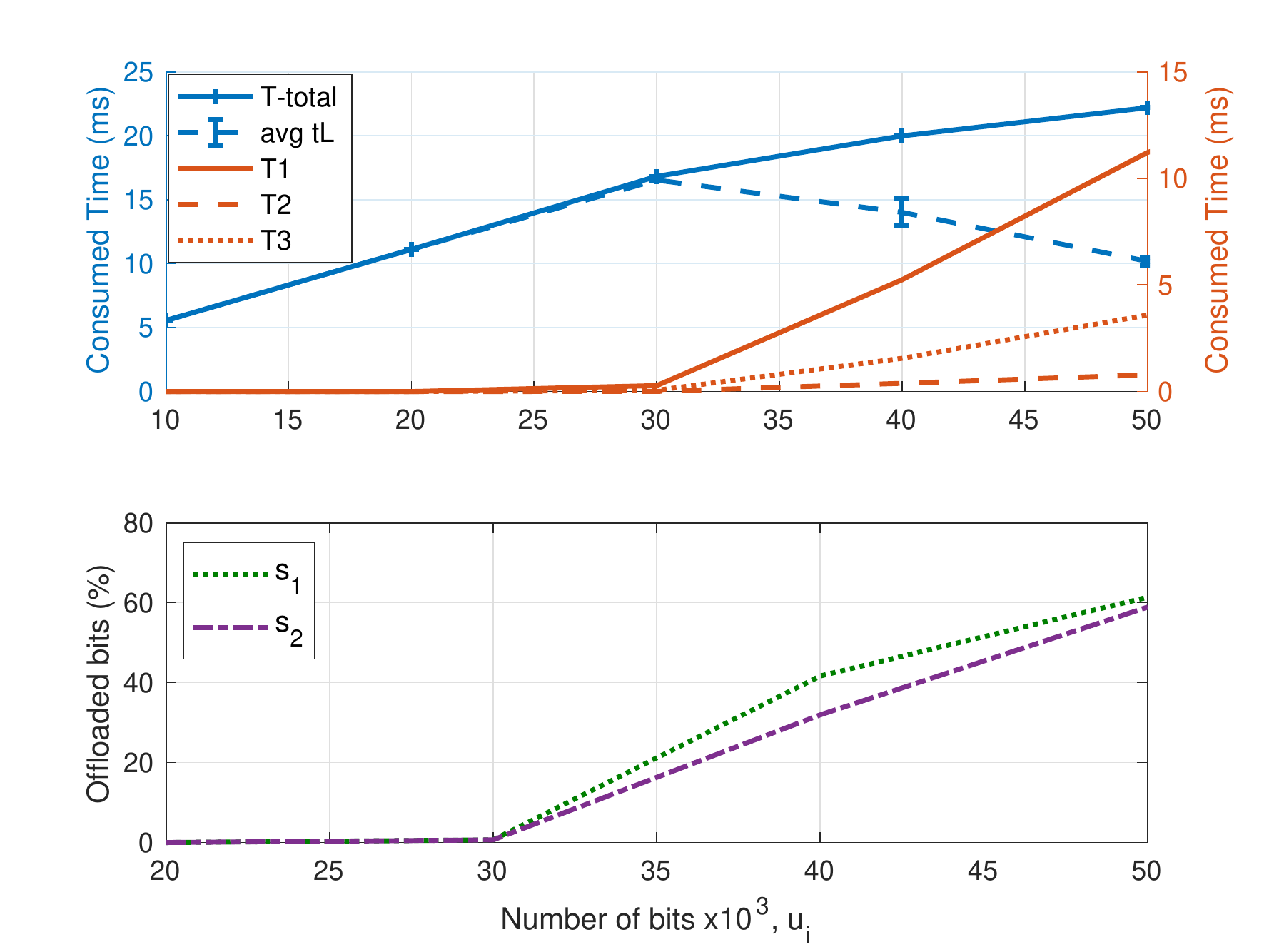}
                \caption{Consumed time, total and per phase (top), percentage offloaded data (bottom) under imperfect CSI}
                \label{time_error}
        \end{minipage}
       \hfill
        \begin{minipage}[b]{0.5\textwidth}
                \centering
                \includegraphics[scale = 0.38]{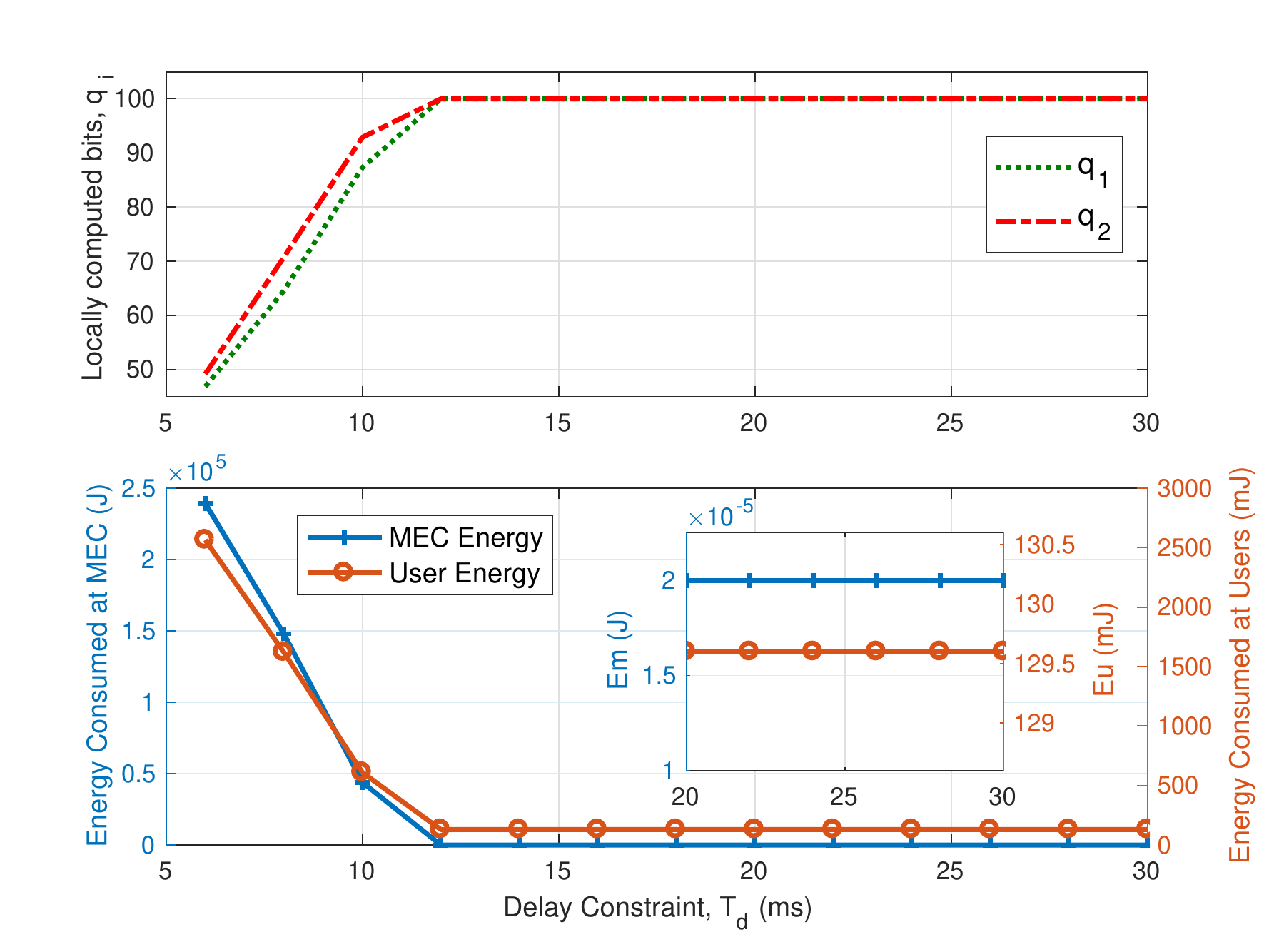}
		\caption{Percentage of locally computed data (top) and energy consumption at users and MEC (bottom)}
		\label{delay_new}
        \end{minipage}
\end{figure}

Figure~\ref{bits_all} shows the percentage of offloaded bits (for two representative users), the percentage energy consumption at the users and at the MEC, the breakdown for the percentage of energy consumed at the users, and the actual energy consumed at the MEC for computation and transmission. When $u < 30$kbits, no data is offloaded, hence the system's energy consumption is solely due to the user computation. As data requests increase, some data is offloaded to the MEC reaching around 60\% at $u_i = 70$kbits. Correlating with Figure~\ref{time_figure}, the data partition for local and offloaded bits is optimized such that the total time remains within the latency constraint. For $u > 40$kbits, the MEC's energy consumption becomes dominant since more than half the data is offloaded. Note that the actual energy consumption at both users and the MEC would increase as more data is requested since $E_{u}$ and $E_{m}$ are both proportional to $u_i$. Computation energy at user is proportional to $u_i - s_i$ and hence decreases with increasing $s_i$. For the MEC, since both computation and transmission energy in (\ref{E_OC}) and (\ref{E_dl}) respectively, increase proportionally to $s_i$, their percentage energy consumption remains constant. We therefore show the breakdown of the actual energy consumption at the MEC. For both the users and the MEC, wireless transmission consumes significant energy. For the MEC, since downloading energy (left yaxis) is significantly larger than the computation energy (right yaxis), they are therefore shown on separate axes.
\vspace{-1mm}
\subsection{Effect of Latency Requirement}
Figure~\ref{delay_new} shows the percentage of locally computed data (for two users), and the breakdown for the total energy consumed at the users and MEC as $T_d$ is increased for $u_i = u = 20\text{kbits} \ \forall i$. For strict latency constraint, we see that less than half the data is computed locally. This implies that energy is consumed for wireless transmission (offload/download) as well as computation (at both users and MEC). However, as the delay requirement is relaxed, for $T_d > 12$ms, all data is computed locally, and the weighted energy consumption in this regime can be approximated as  $E_{\text{total}} \approx (1 - w)E_u$. With the relaxed constraint, users can afford to spend more time for computation as $t_{L,i}$ using their low CPU frequencies, which leads to lower energy consumption. We therefore see that both $E_u$ and $E_m$ settle to a constant level. $E_m$ becomes negligible since all data is computed locally for larger values of $T_d$.
\vspace{-1mm}
\subsection{Effect of Data Partition}  
\setlength{\belowcaptionskip}{-20pt}
\begin{figure}[t]
        \begin{minipage}[b]{0.45\textwidth}
                \centering
                \includegraphics[scale = 0.35]{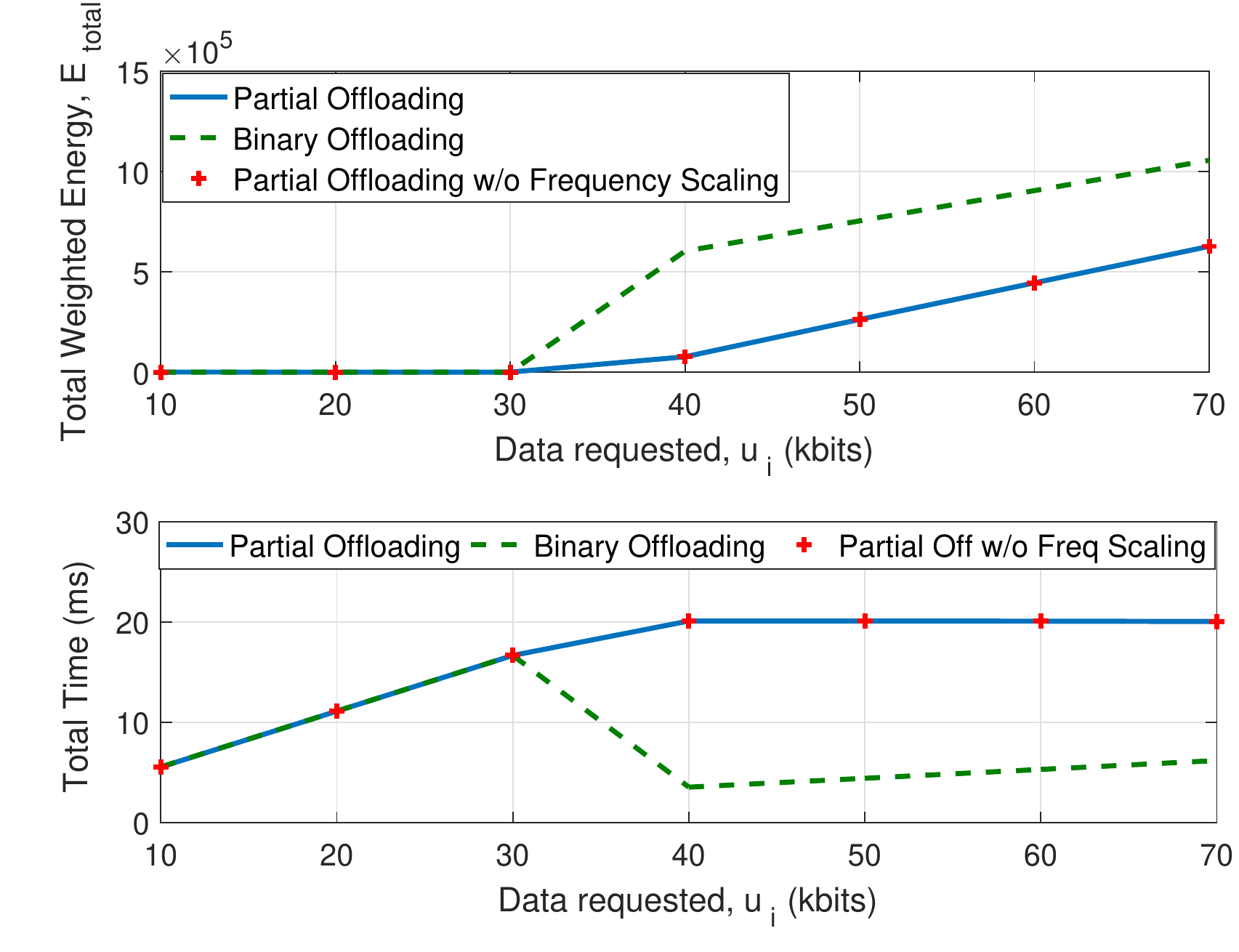}
                \caption{Energy and time consumption for three different schemes with increasing amount of requested data}
                \label{both_data}
        \end{minipage}
       \hfill
        \begin{minipage}[b]{0.45\textwidth}
                \centering
                \includegraphics[scale=0.35]{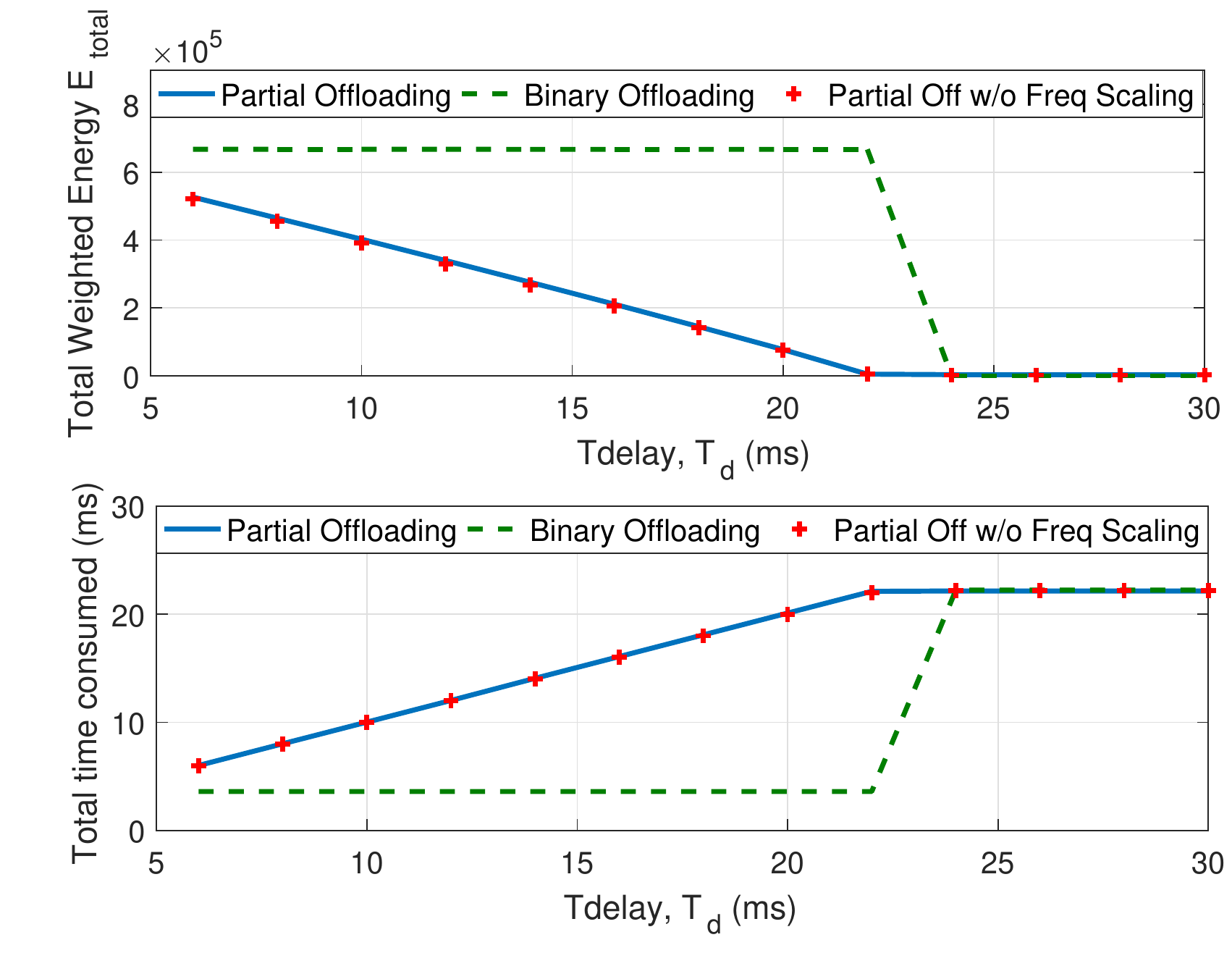}
                \caption{Energy and time consumption for three different schemes with relaxation in latency constraint}
                \label{both_Tdelay}
        \end{minipage}
\end{figure}
In this section we analyze the effect of data partitioning on the energy consumption under a given latency constraint. Specifically we compare the proposed partial offloading scheme with binary offloading where each user's task cannot be partitioned and is either computed entirely at the local user or at the MEC. The binary offloading solution presented is the best one with the lowest overall energy consumption chosen from all possible binary offloading combinations. Figure \ref{both_data} shows the time and energy consumption for the case when the requested computation data is increased with a fixed latency constraint of $T_d = 20$ms. We see significant disparity between the binary and partial offloading schemes. For low data requests, local processing at users is optimal so both schemes consume the same time and energy, however for larger data requests, the binary scheme offloads all the data to the MEC, leading to faster time performance but at the expense of multiple times larger energy consumption, attributed mainly to the energy consumed for wireless transmission in phases I and III. Note that the partial offloading solutions always meet the latency constraint so there is no benefit in faster time performance for the binary scheme, whereas the energy saving for partial offloading is significant.

Figure \ref{both_Tdelay} shows the energy and time consumption of the two schemes as the delay constraint is relaxed, that is, $T_d$ is increased at a fixed amount of requested data, $u_i = 40$ kbits. For this amount of data, both the energy and time consumed for the two schemes converge for $T_d \geq 22$ms, when the latency constraint is lax enough to allow all data to be computed locally for both schemes. For tighter latency constraints, however, binary offloading consumes much higher energy for tight latency since all data is offloaded to the MEC to meet the latency requirement. Partial offloading with data partitioning therefore appears as a potent design variable for the resource allocation problem, with significant impact on the system's energy consumption.
\vspace{-1mm}
\subsection{Effect of Frequency Scaling}  
In addition, we analyze the effect of CPU frequency scaling on energy consumption and latency. Figures \ref{both_data} and \ref{both_Tdelay} show the results for partial offloading with and without frequency scaling against the amount of requested data $u_i$ and the latency requirement $T_{d}$. The scheme without frequency scaling simply allocates the CPU frequencies of all users at the maximum as $f_{u,i} = f_{\max}$ and the MEC frequency equally among all users' tasks as $f_{m,i} = \frac{f_{m,\max}}{K}$. 

The results in Figure \ref{both_data} show that frequency scaling has negligible effects on the energy and time consumption. Figure \ref{both_Tdelay} also shows comparable results of partial offloading with and without frequency scaling as the latency requirement is relaxed. These results suggest that the overhead of optimizing CPU frequency can be avoided, as long as the data partition and time allocation per phase is optimized.
\vspace{-1mm}
\subsection{Effect of Massive-MIMO Channel Estimation Error}
Figure~\ref{time_error} shows the percentage energy consumption at the users and MEC, and the time consumption in total and per phase for the case where the effect of pilot contamination in channel estimation is included. We see a similar trend as the case of perfect channel estimation in Figure~\ref{time_figure}, however, more data is offloaded to the MEC. This would imply that wireless transmission with the meager transmit power at the users consumes more time and energy. We therefore see that for $u_i \geq 30$ kbits almost half the total time is spent in the Phase I. Similarly, the time consumed for Phase III is approximately twice that for the case of perfect channel estimation. The time for MEC computation, $T_2$ is comparable for both cases since more offloaded data does not significantly increase the processing time at the MEC due to its powerful CPUs. A key difference to note under imperfect CSI is that for $u_i > 40$ kbits, the latency constraint is violated, since the sum time for offloading and local computation becomes larger than the delay constraint, $T_d \leq 20$ms.
\vspace{-1mm}
\subsection{Algorithm Convergence}
\setlength{\belowcaptionskip}{-20pt}
\begin{figure}[t]
        \begin{minipage}[b]{0.4\textwidth}
                \centering
                \includegraphics[scale = 0.375]{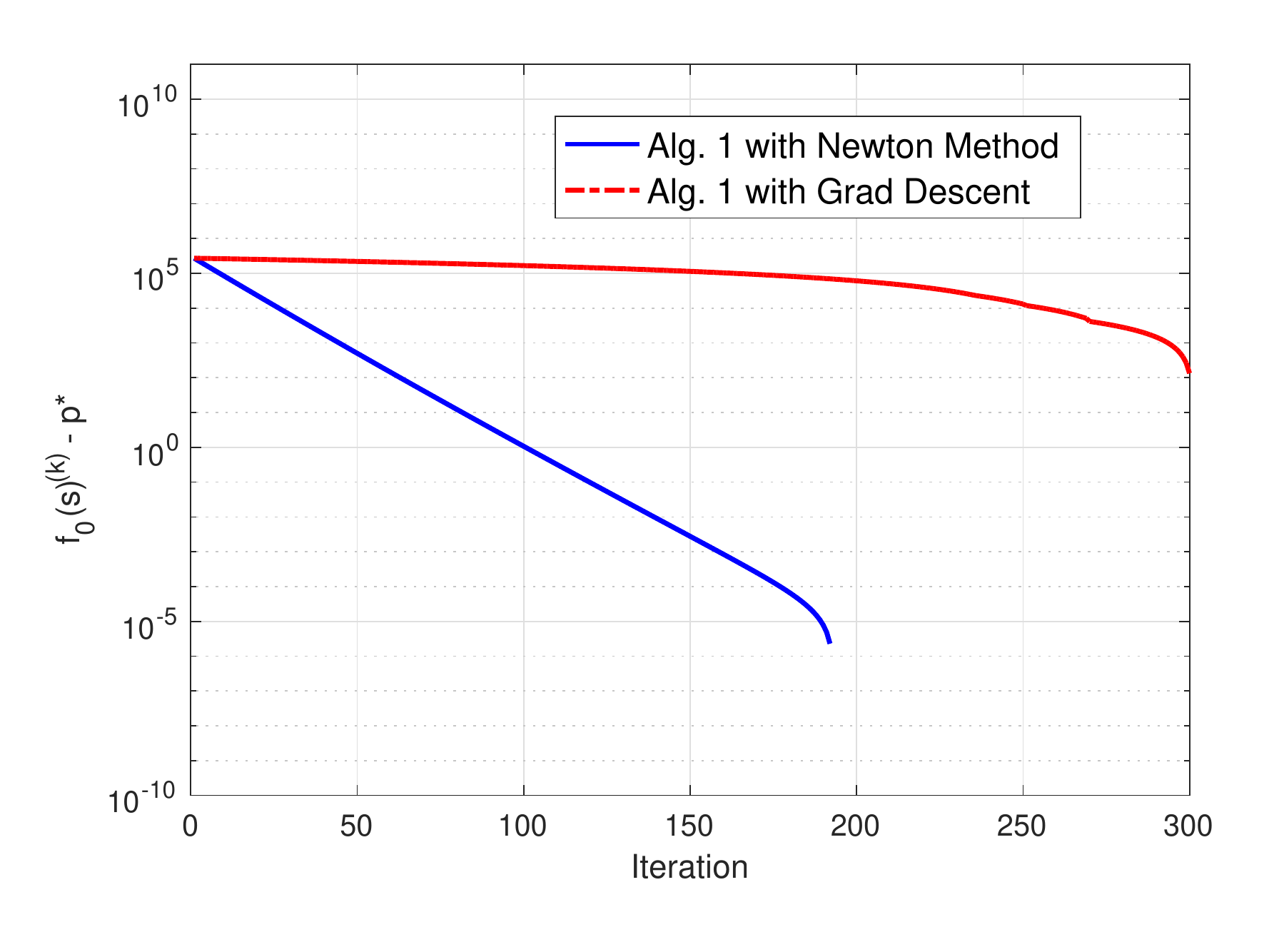}
                \caption{Algorithm Convergence}
                \label{convergence_both}
        \end{minipage}
       \hfill
        \begin{minipage}[b]{0.6\textwidth}
                \centering
                \includegraphics[scale=0.35]{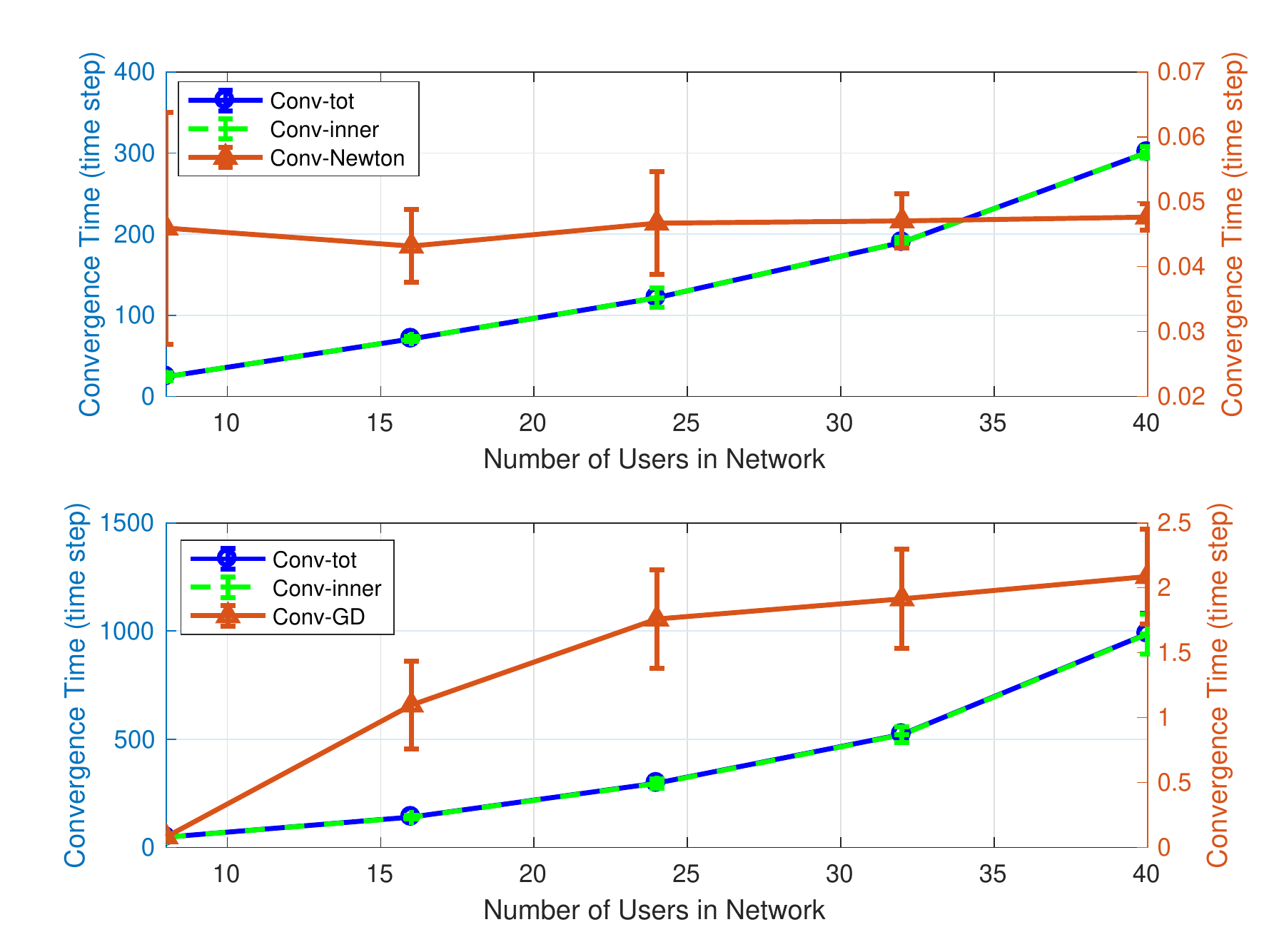}
                \caption{Convergence speed using the Newton Method (top) or GD (bottom)}
                \label{speed}
        \end{minipage}
\end{figure}
Figure \ref{convergence_both} shows the convergence for our proposed optimization algorithm with two descent methods for the outer optimization for $u_i = u = 20\text{kbits} \ \forall i$. We compare between the latency-aware Newton method and latency-aware Gradient Descent (GD). For both methods, a call is made at each iteration to the ellipsoid algorithm which has slow convergence and hence dictates the nested algorithm's speed. Using latency-aware Newton method, we observe superlinear convergence which agrees with our analysis in Section~\ref{algo_converg}. For the GD method, overall linear convergence is observed. While an outer tolerance level of $\epsilon_1 = 10^{-5}$ is used for both methods, the GD method is preemptively stopped in Figure \ref{convergence_both} because the boundary condition for $\boldsymbol{s}$ is met, that is $\boldsymbol{s = 0}$. Therefore using the latency-aware Newton method, our algorithm approaches a lower tolerance at convergence than GD.

Figure \ref{speed} shows the convergence time for the nested algorithm, and that consumed separately by the inner and outer algorithms. The overall convergence time increases almost linearly with the number of users. A significant disparity in the convergence speed when using latency-aware Newton method or GD is seen, since the Newton method converges in considerably fewer iterations compared to GD, hence making fewer function calls to the inner ellipsoid algorithm with slow convergence. Looking at the time breakdown for the outer and inner algorithms, the inner ellipsoid algorithm consumes almost all the time of the nested algorithm. Comparing between the outer algorithms, latency-aware Newton method converges more than ten times faster than GD and has an almost constant convergence speed with respect to the number of users. For latency-aware GD, however, the convergence time increases with the number of users. For our implementation on a personal computer, the time step unit is a second, however for faster machines running this algorithm, for instance MEC servers with the high-performance CPUs, this time-step could be significantly smaller and can potentially allow real time implementation.

\section{Conclusion} 
We formulated a novel system-level energy optimization problem for a delay constrained, massive MIMO enabled multi-access edge network. We designed efficient nested algorithms to minimize the total weighted energy consumption at both the user(s) and the MEC, with more weight on the users' energy consumption to commensurate the different magnitudes of available power at a user and an MEC. Comparing between two different approaches for the outer algorithm showed that the latency-aware Newton method is fast and scalable with the number of users. These algorithms demonstrate that it is optimal to compute data locally for a low amount of data requests or relaxed latency constraint. For larger data requests, however,  it is necessary to partially offload data to the MEC for computation in order to meet the latency constraint since the local computation time at users is a limiting factor due to meager processing resources. Furthermore, channel estimation error on massive MIMO links due to pilot contamination causes the transmission time and energy to increase owing to larger amounts of data offloaded to the MEC. Comparison to the binary offloading scheme also reveals significant gains in energy efficiency for the proposed partial offloading scheme. Our algorithms offer practical means to achieving the minimum network energy consumption while meeting the required latency.

\section{Appendix}
\subsection{Appendix A - Proof for Lemma 2}
The objective function is affine and convex. 
\begin{itemize}
\item Constraints (c), (e), (g), (h) for (P) in (\ref{Pnew}) are linear. 
\item For constraints (a) and (b), the second term is quadratic and convex in $f_{u,i}$ and $f_{m,i}$ respectively. The first terms are of the form $f(x) = x 2^{\frac{1}{x}}$ in $t_{u,i}$ and $t_{d,i}$ respectively, with $\nabla^2_x f(x) = \frac{2^{\frac{1}{x}}}{x^3}$ > 0 for $x > 0$, and hence $f(x)$ is convex in $x$. 
\item For constraints (d) and (f), the function is of the form $f(x) = \frac{1}{x}$ in $f_{u,i}$ and $f_{m,i}$ respectively with $\nabla^2_x f(x) = \frac{2}{x^3} > 0$ and hence convex. 
\item Relevant constraints are also linear and convex in $E_{u}$, $E_{m}$ and $T_j \ \forall j$. 
\end{itemize}
\vspace{-4mm}
\subsection{Appendix B - Proof for Theorem 1}
The Lagrangian dual of the problem (P) is given as
\setcounter{equation}{23}
\begin{align*}\label{LDual}
\mathcal{L} &=  E_{\text{total}} + \lambda_0 (E_{\text{total}} - (1 - w) E_u - w E_m ) + \lambda_1 (\sum_{j=1}^{3} T_j - T_{d}) + \lambda_2 \Big ( \sum_{i=1}^{K_u} \frac{t_{u,i}(2^{\frac{s_i}{\nu t_{u,i}B}} - 1)\Gamma_{1}\sigma_{1,i}^2}{N \gamma_i} \\
&+ \sum_{i=1}^{K_u}\kappa_i c_i (u_i - s_i) f_{u,i}^2\Big ) + \lambda_3 \left( \sum_{i=1}^{K_u} \frac{t_{d,i} (2^{\frac{\mu s_i}{t_{d,i} B}} - 1)\Gamma_{2}\sigma_{2,i}^2}{N \gamma_{i}} + \sum_{i = 1}^{K_u} \kappa_m d_m f_{mi}^2 s_i\right) + \sum_{i=1}^K \xi_i \Big(\frac{c_i q_i}{f_{u,i}} + t_{u,i} - T_{d} \Big) \\
&+\sum_{i=1}^K \beta_i (t_{u,i} - T_1)+ \sum_{i=1}^K \theta_i \Big(\frac{d_m s_i}{f_{m,i}} - T_2 \Big) + \sum_{i=1}^K \phi_i (t_{d,i} - T_3) + \lambda_5 \Big(\sum_{i=1}^{K} f_{m,i} - f_{m,\max}\Big) \tag{\theequation}
\end{align*}

Taking the derivative of the Lagrangian in (\ref{LDual}) with respect to $E_{\text{total}}$, $E_u$ and $E_m$ and setting it equal to zero results in $\lambda_0 = -1$, $\lambda_2 = 1 - w$ and $\lambda_3 = w$ respectively.

Applying Karush-Kuhn-Tucker (KKT) conditions
\subsubsection{with respect to offloading time $t_{u,i}$ in Phase I}
\setcounter{equation}{24}
\begin{align*}\label{diffT1}
\nabla_{t_{u,i}}&\mathcal{L} = (1 - w) \Bigg( \frac{\left(2^{\frac{s_i}{\nu t_{u,i} B}} - 1 \right) \Gamma_{1} \sigma_{1,i}^2}{N \gamma_i} - \frac{s_i \ln 2 \left(2^{\frac{s_i}{\nu t_{u,i} B}} \right) \Gamma_{1} \sigma_{1,i}^2}{\nu B t_{u,i} N \gamma_i} \Bigg) + \beta_i + \xi_i = 0\\
&\iff (1 - w) \left( f(x_{1,i}) - x_{1,i} f'(x_{1,i}) \right) + \beta_i + \xi_i = 0 \tag{\theequation}
\end{align*}
where $x_{1,i} = \frac{1}{t_{u,i}}$ and $f(x_{1,i}) = \frac{\left(2^{\frac{s_i}{\nu t_{u,i} B}} - 1 \right) \Gamma_{1} \sigma_{1,i}^2}{N \gamma_i}$. 
\subsubsection{with respect to downloading time $t_{d,i}$ in Phase III}
\setcounter{equation}{25}
\begin{align*}\label{diffT3}
\nabla_{t_{d,i}}\mathcal{L} &= w \Bigg( \frac{\left(2^{\frac{\mu s_i}{t_{d,i} B}} - 1 \right) \Gamma_{2} \sigma_{2,i}^2}{N \gamma_i} - \frac{\mu s_i \ln 2 \left(2^{\frac{\mu s_i}{t_{d,i} B}}\right) \Gamma_{2} \sigma_{2,i}^2}{B t_{d,i} N \gamma_i} \Bigg) + \phi_i = 0\\
& \iff w (f(x_{2,i}) - x_{2,i} f'(x_{2,i})) + \phi_i = 0 \tag{\theequation}
\end{align*}
where $x_{2,i} = \frac{1}{t_{d,i}}$ and $f(x_{2,i}) = \frac{\left(2^{\frac{\mu s_i}{t_{d,i} B}} - 1 \right) \Gamma_{2} \sigma_{2,i}^2}{N \gamma_i}$.

Substituting $y = -\frac{\beta_i + \xi_i}{(1 - w)}$, $x = x_{1,i} = \frac{1}{t_{u,i}}$, $c = \frac{\nu}{s_i}$, $\sigma^2 = \frac{ \Gamma_{1} \sigma_{1,i}^2}{N \gamma_i}$ in~(\ref{LambertSol}) for $t_{u,i}$, and $y = - \frac{\phi_i}{w}$, $x = x_{2,i} = \frac{1}{t_{d,i}}$, $c = \frac{1}{\mu s_i}$, $\sigma^2 = \frac{\Gamma_{2} \sigma_{2,i}^2}{N \gamma_i}$ in~(\ref{LambertSol}) for $t_{d,i}$, we get the uploading (downloading) times, $t_{u,i} (t_{d,i})$ in~(\ref{x1x2}-b), respectively.
\vspace{-1mm}
\subsection{Appendix C - Proof for Theorem 2}
Applying KKT conditions 
\subsubsection{with respect to the local CPU frequency at the $i^{\text{th}}$ user $f_{u,i}$}
\setcounter{equation}{26}
\begin{align*}\label{difffiapp}
\nabla_{f_{u,i}}\mathcal{L} &=  2 (1 - w) \kappa_i c_i (u_i - s_i) f_{u,i} - \xi_i \frac{c_i q_i}{f_{u,i}^2} = 0\\
&\iff  2 (1 - w) \kappa_i c_i (u_i - s_i) f_{u,i}^3 - \xi_i c_i (u_i - s_i) = 0  \tag{\theequation}
\end{align*}\setcounter{equation}{27}

\subsubsection{with respect to the MEC CPU frequency for computation of the $i^{\text{th}}$ user's task $f_{m,i}$}
\squeezeup \squeezeup
\begin{align*}\label{difffmiapp}
    \nabla_{f_{m,i}} \mathcal{L} &=  2 w \kappa_m d_m s_i f_{m,i} - \frac{\theta_i d_m s_i}{f_{m,i}^2} + \lambda_5 = 0 \tag{\theequation}
\end{align*}
Simplifying (\ref{difffiapp}) and (\ref{difffmiapp}) leads to (\ref{difffi}) and (\ref{difffmi}) respectively. The equation for $f_{mi}$ is in terms of the variable $s_i$ but can be solved in closed form as a root for the cubic equation of the form $af_{mi}^3 + bf_{mi}^2 + cf_{mi} + d$, where $a = 2W\kappa_m d_m s_i$, $b = \lambda_5$, $c = 0$ and $d = -\theta_i d_m s_i$.  

\bibliographystyle{./IEEEtran}
\bibliography{./newbib}
\end{document}